\documentclass[english]{article}
\usepackage{amsfonts,amsmath,amsthm,amscd,amssymb,latexsym}

\theoremstyle{plain}
\newtheorem{thm}{Theorem}[section]

\newtheorem{prop}[thm]{Proposition}

\theoremstyle{definition}
\theoremstyle{remark}
\numberwithin{equation}{section}
\newcommand{\keywords}{\textbf{Key words and phrases: }\medskip}
\newcommand{\subjclass}{\textbf{Math. Subj. Clas.: }\medskip}


\begin{document}
\title{\textbf{A discrete model of the Dirac-K\"{a}hler equation} }
\author{\textbf{Volodymyr Sushch} \\
{ Koszalin University of Technology} \\
 { Sniadeckich 2, 75-453 Koszalin, Poland} \\
 {volodymyr.sushch@tu.koszalin.pl} }

\date{}
\maketitle
\begin{abstract}
We construct a new discrete analog  of the Dirac-K\"{a}hler equation in which some key geometric aspects of the continuum counterpart are captured.
We describe a discrete Dirac-K\"{a}hler equation in the intrinsic notation as a set of difference equations and prove several statements about its decomposition into difference equations of Duffin type. We study an analog of gauge transformations for the massless discrete Dirac-K\"{a}hler equations.
\end{abstract}

\keywords{Dirac-K\"{a}hler equation, Dirac operator, difference equations, discrete models}

 \subjclass  {39A12, 39A70, 35Q41}
 
 \section{Introduction}
The problem of finding discrete analogies of various continuous models is of considerable interest in physical theories.
In this work we study a discrete counterpart of the Dirac-K\"{a}hler equation which describes  fermion fields in terms of inhomogeneous differential forms. It is well-known that a very useful way to discretize quantum field equations is provided by a lattice approach. However, there are some difficulties in capturing fundamental properties of differential operators on the lattice, for example, the differential defined as a difference operator in the lattice formulation does not satisfy the Leibniz rule. We propose a discretization scheme based on the use of the differential form language in which the exterior derivative  $d$, the Hodge star operator $\ast$ and the exterior product $\wedge$ of differential forms are replaced by their discrete analogies. The algebraic relations between these operators are captured in the proposed discrete model.
We are interested in a discrete Dirac-K\"{a}hler equation only from the mathematics point of view.
Our approach bases on the differential geometric formalism proposed by Dezin \cite{Dezin}. We adapt a combinatorial model of Minkowski space from \cite{S3} and define discrete analogs of the operators $d, \ast$ and $\wedge$ in the same way as in \cite{S2}. Alternative geometric discretisation schemes which use also a discrete exterior calculus has already been developed by several authors  \cite{Becher,  Dodziuk, Rabin, SSSA, Teixeira, Wilson}. In the lattice formulation the first attempts to construct a discrete model  of the Dirac-K\"{a}hler equation based on differential geometric methods  were done by Rabin \cite{Rabin} and by Becher and Joos \cite{Becher}. In a sequel of papers \cite{Adda, Beauce,  Campos, Catterall1, Catterall2, Catterall, Kanamori} the Dirac-K\"{a}hler equation on the lattice has been extensively studied.
Our approach is close to one proposed by Rabin \cite{Rabin}. The obtained discrete Dirac-K\"{a}hler and Duffin equations are formally the same as in  \cite{ Rabin}. However, our discrete analog of the Hodge star operator is different than that given in \cite{Rabin}. In \cite{Rabin}, this operation is defined by ussing dual lattice.  We define a star operation by using a discrete analog of the exterior product. This improvement allows us to preserve the Lorentz metric structure in our discrete model and introduce an inner product of discrete forms (cochains). We may therefore construct a discrete codifferential so as to be the formal adjoint of a coboundary operator. Consequently, we obtain more precise difference equations.

The main purpose of this paper is to construct a discrete model of the Dirac-K\"{a}hler equation which preserves the geometric structure of the continuum counterpart.
We show that some key properties of the Dirac-K\"{a}hler system that hold in the smooth setting also hold in the discrete case. It is known \cite{Cantoni} that the Dirac-K\"{a}hler equation  decomposes uniquely into four uncoupled equations of Duffin type. We prove the same in the discrete case.
Recently  much attention has been directed to the study of the massless Dirac-K\"{a}hler field \cite{Kruglov1, Malik, Pletyuhov}. From the physics point of view the Dirac-K\"{a}hler system has three massless limits \cite{Kruglov1}. In \cite{Malik, Pletyuhov}, it has shown that  the Kalb-Ramond field (notoph) equations and the electromagnetic equations are partial cases of the Dirac-K\"{a}hler massless equation. We formulate a discrete version of these massless cases. It is well known that the electromagnetic field is invariant under the gauge transformations (see for instance \cite{Nakahara}). We study analogies of these transformations for the discrete model.

\section{Smooth settings}
Let $M={\mathbb R}^{1,3}$ be  Minkowski space with  metric signature  $(+,-,-,-)$.
Denote by $\Lambda^r(M)$ the vector space of smooth differential $r$-form, $r=0,1,2,3,4$.
 Let $\overset{r}{\omega}$ and  $\overset{r}{\varphi}$ be  $r$-forms on $M$. The inner product is defined by
\begin{equation}\label{2.1}
(\overset{r}{\omega}, \ \overset{r}{\varphi})=\int_{M}\overset{r}{\omega}\wedge\ast\overset{r}{\varphi},
\end{equation}
where $\wedge$ is the exterior product and $\ast$ is the Hodge star operator  $\ast:\Lambda^r(M)\rightarrow\Lambda^{4-r}(M)$ (with respect to the Lorentz metric).
Let $d:\Lambda^r(M)\rightarrow\Lambda^{r+1}(M)$ be the exterior differential and let $\delta:\Lambda^r(M)\rightarrow\Lambda^{r-1}(M)$ be the formal adjoint of $d$  with respect to (\ref{2.1}) (the codifferential). We have $\delta=-\ast d\ast$.  Then the Laplacian (Laplace-Beltrami operator) acting on $r$-forms is defined by
\begin{equation}\label{2.2}
\Delta\equiv -d\delta-\delta d:\Lambda^r(M)\rightarrow\Lambda^{r}(M).
\end{equation}
It is clear that $-d\delta-\delta d=(d-\delta)^2$.
For  Minkowski space the operator $\Delta$ is the d'Alembert (wave) operator.
 Then for any $r$-form $\overset{r}{\omega}\in\Lambda^r(M)$  the Klein-Gordon  equation can be written as
 \begin{equation}\label{2.3}
 \Delta\overset{r}{\omega}=m^2\overset{r}{\omega},
\end{equation}
where $m$  is a mass parameter.
Denote by $\Lambda(M)$ the set of all differential forms on $M$. We have
$\Lambda(M)=\Lambda^0(M)\oplus\Lambda^1(M)\oplus\Lambda^2(M)\oplus\Lambda^3(M)\oplus\Lambda^4(M)$. Let $\Omega\in\Lambda(M)$
be an inhomogeneous differential form.   This form can be expanded  as
\begin{equation*}
\Omega=\sum_{r=0}^4\overset{r}{\omega},
\end{equation*}
where $\overset{r}{\omega}\in\Lambda^r(M)$.  The Dirac-K\"{a}hler equation is given by
\begin{equation}\label{2.4}
(d-\delta)\Omega=m\Omega.
\end{equation}
First proposed by K\"{a}hler \cite{Kahler}, this  equation is the generalization of the Dirac equation.
It is easy to show, that Equation (\ref{2.4}) is equivalent to the following equations
\begin{align*}\label{}
-\delta\overset{1}{\omega}=m\overset{0}{\omega},\\
d\overset{0}{\omega}-\delta\overset{2}{\omega}=m\overset{1}{\omega},\\
d\overset{1}{\omega}-\delta\overset{3}{\omega}=m\overset{2}{\omega},\\
d\overset{2}{\omega}-\delta\overset{4}{\omega}=m\overset{3}{\omega},\\
d\overset{3}{\omega}=m\overset{4}{\omega}.
\end{align*}
There are three massless limits of this system \cite{Pletyuhov}. Let us consider $m=\sqrt{m_1m_2}$. Then we obtain the massless Dirac-K\"{a}hler equation in the case if any of mass parameters $m_1, m_2$ (or both simultaneously) is equal to zero.

\section{Combinatorial model of Minkowski space}

Following \cite{Dezin}, let the tensor product  $C(4)=C\otimes C\otimes C\otimes C$
of a   1-dimensional complex be a combinatorial model of Euclidean space
 ${\mathbb R}^4$. The 1-dimensional complex $C$ is defined in the following way.
Let $C^0$ denotes the real linear space of 0-dimensional chains generated by
basis elements $x_\kappa$ (points), $\kappa\in {\mathbb Z}$. It is convenient to
introduce the shift operators  $\tau,\sigma$ in the set of indices by
\begin{equation}\label{3.1}
\tau\kappa=\kappa+1, \qquad \sigma\kappa=\kappa-1.
\end{equation}
We denote the open interval $(x_\kappa, x_{\tau\kappa})$ by $e_\kappa$.
One can regard the set $\{e_{\kappa}\}$ as a set of basis elements of
the real linear space $C^1$. Suppose that $C^1$ is the space of 1-dimensional
chains.
Then the 1-dimensional complex (combinatorial real line) is the direct sum
of the introduced spaces $C=C^0\oplus C^1$. The boundary operator $\partial$
in $C$ is given by
$$
\partial x_\kappa=0, \qquad  \partial e_\kappa=x_{\tau\kappa}-x_\kappa.
$$
The definition is extended to arbitrary chains by linearity.

Multiplying the basis elements $x_\kappa, e_\kappa$ in various way we obtain
basis elements of $C(4)$. Let $s_k$ be  an arbitrary basis element of $C(4)$. Then we have
${s_k=s_{k_0}\otimes s_{k_1}\otimes s_{k_2}\otimes s_{k_3}}$, where $s_{k_i}$
is either  $x_{k_i}$ or  $e_{k_i}$ and $k_i\in {\mathbb Z}$. Here $k=(k_0, k_1, k_2, k_3)$ is a multi-index.
The 1-dimensional basis elements
of $C(4)$ can be written as
\begin{align}\label{3.2}
e_k^0=e_{k_0}\otimes x_{k_1}\otimes x_{k_2}\otimes x_{k_3}, \nonumber \qquad
e_k^1=x_{k_0}\otimes e_{k_1}\otimes x_{k_2}\otimes x_{k_3}, \\
e_k^2=x_{k_0}\otimes x_{k_1}\otimes e_{k_2}\otimes x_{k_3},  \qquad
e_k^3=x_{k_0}\otimes x_{k_1}\otimes x_{k_2}\otimes e_{k_3},
\end{align}
where  the superscript $i$ indicates  a place of $e_{k_i}$ in $e_k^i$ and  $i=0,1,2,3$.
In the same way we will write  the 2-dimensional basis elements
of $C(4)$ as
\begin{align}\label{3.3}
e_k^{01}=e_{k_0}\otimes e_{k_1}\otimes x_{k_2}\otimes x_{k_3},  \qquad
e_k^{12}=x_{k_0}\otimes e_{k_1}\otimes e_{k_2}\otimes x_{k_3},\nonumber \\
e_k^{02}=e_{k_0}\otimes x_{k_1}\otimes e_{k_2}\otimes x_{k_3}, \qquad
e_k^{13}=x_{k_0}\otimes e_{k_1}\otimes x_{k_2}\otimes e_{k_3}, \nonumber \\
e_k^{03}=e_{k_0}\otimes x_{k_1}\otimes x_{k_2}\otimes e_{k_3}, \qquad
e_k^{23}=x_{k_0}\otimes x_{k_1}\otimes e_{k_2}\otimes e_{k_3}.
\end{align}
Denote by $e_k^{012}, e_k^{013}, e_k^{023}, e_k^{123}$ the 3-dimensional basis elements
of $C(4)$.

Let $C(4)=C(p)\otimes C(q)$, where $p+q=4$. If $a\in C(p)$ and $b\in C(q)$ are arbitrary chains, belonging to the complexes being multiplied, then  we extend the definition of   $\partial$ to chains of $C(4)$ by the rule
\begin{equation}\label{3.4}
\partial(a\otimes b)=\partial a\otimes b+(-1)^ra\otimes\partial b,
\end{equation}
where $r$ is the dimension of the chain $a$.

Suppose that the combinatorial model of Minkowski space has the same structure
as $C(4)$.  We will use the index $k_0$ to denote the basis elements of $C$ which correspond to the time coordinate of
$M$. Hence the indicated basis elements will be written as  $x_{k_0}$, $e_{k_0}$.

Let us now consider a dual complex to $C(4)$. We define its as the complex of cochains
$K(4)$ with real coefficients. The complex $K(4)$
has a similar structure, namely ${K(4)=K\otimes K\otimes K\otimes K}$, where $K$ is a dual
complex to the 1-dimensional complex $C$. We will write the basis elements of $K$  as
$x^\kappa$ and $e^\kappa$, $\kappa\in {\mathbb Z}$. Then an arbitrary basis element of $K(4)$ can be written  as
${s^k=s^{k_0}\otimes s^{k_1}\otimes s^{k_2}\otimes s^{k_3}}$, where $s^{k_i}$
is either  $x^{k_i}$ or  $e^{k_i}$. Denote by $s_{(r)}^k$  an $r$-dimensional basis element of $K(4)$. Here the symbol $(r)$ contains the whole requisite information about the number and situation of $e^{k_i}\in K$  in $s_{(r)}^k$. For example, the  1-dimensional basic elements
$e_i^k\in K(4)$ and the  2-dimensional basic elements
$e_{ij}^k\in K(4)$ have the form (\ref{3.2}) and (\ref{3.3})  respectively. We will call cochains forms,
emphasizing their relationship with the corresponding continual objects, differential forms.

Denote by $K^r(4)$ the set of forms of degree $r$. We can represent  $K(4)$ as
\begin{equation*}
K(4)=K^0(4)\oplus K^1(4)\oplus K^2(4)\oplus K^3(4)\oplus K^4(4).
\end{equation*}
Let $\overset{r}{\omega}\in K^r(4)$. Then we have
\begin{equation}\label{3.5}
\overset{0}{\omega}=\sum_k\overset{0}{\omega}_kx^k, \quad  \mbox{where} \quad x^k=x^{k_0}\otimes x^{k_1}\otimes x^{k_2}\otimes x^{k_3},
\end{equation}
\begin{equation}\label{3.6}
\overset{1}{\omega}=\sum_k\sum_{i=0}^3\omega_k^ie_i^k, \qquad
\overset{2}{\omega}=\sum_k\sum_{i<j} \omega_k^{ij}e_{ij}^k, \qquad
\overset{3}{\omega}=\sum_k\sum_{i<j<l} \omega_k^{ijl}e_{ijl}^k,
\end{equation}
\begin{equation}\label{3.7}
\overset{4}{\omega}=\sum_k\overset{4}{\omega}_ke^k,    \quad  \mbox{where} \quad  e^k=e^{k_0}\otimes e^{k_1}\otimes e^{k_2}\otimes e^{k_3}.
\end{equation}
Here the components $\overset{0}{\omega}_k, \ \omega_k^i, \ \omega_k^{ij}, \ \omega_k^{ijl}$ and $\overset{4}{\omega}_k$ are real numbers.

As in \cite{Dezin}, we define the pairing (chain-cochain) operation for any basis elements
$\varepsilon_k\in C(4)$,  $s^k\in K(4)$ by the rule
\begin{equation}\label{3.8}
\langle\varepsilon_k, \ s^k\rangle=\left\{\begin{array}{l}0, \quad \varepsilon_k\ne s_k\\
                            1, \quad \varepsilon_k=s_k.
                            \end{array}\right.
\end{equation}
The operation (\ref{3.8}) is linearly extended to arbitrary chains and cochains.

The coboundary operator $d^c: K^r(4)\rightarrow K^{r+1}(4)$ is defined by
\begin{equation}\label{3.9}
\langle\partial a, \ \overset{r}{\omega}\rangle=\langle a, \ d^c\overset{r}{\omega}\rangle,
\end{equation}
where $a\in C(4)$ is an $r+1$ dimensional chain. The operator $d^c$ is an analog of the exterior differential.
From the above it follows that
\begin{equation*}\label{}
 d^c\overset{4}{\omega}=0 \quad \mbox{and} \quad d^cd^c\overset{r}{\omega}=0 \quad \mbox{for any} \quad r.
\end{equation*}
Let us introduce for convenient  the shifts
operator $\tau_i$ and $\sigma_i$ as
\begin{equation}\label{3.10}\tau_ik=(k_0,...\tau
 k_i,...k_3), \quad
 \sigma_ik=(k_0,...\sigma k_i,...k_3), \quad i=0,1,2,3,
  \end{equation}
  where $\tau$ and $\sigma$ are defined by (\ref{3.1}).
Using (\ref{3.8}) and (\ref{3.9}) we can calculate
\begin{equation}\label{3.11}
d^c\overset{0}{\omega}=\sum_k\sum_{i=0}^3(\Delta_i\overset{0}{\omega}_k)e_i^k,
\end{equation}
\begin{equation}\label{3.12}
d^c\overset{1}{\omega}=\sum_k\sum_{i<j}(\Delta_i\omega_k^j-\Delta_j\omega_k^i)e_{ij}^k,
\end{equation}
\begin{align}\label{3.13}
d^c\overset{2}{\omega}=\sum_k\big[(\Delta_0\omega_k^{12}-\Delta_1\omega_k^{02}+\Delta_2\omega_k^{01})e_{012}^k\nonumber \\
+(\Delta_0\omega_k^{13}-\Delta_1\omega_k^{03}+\Delta_3\omega_k^{01})e_{013}^k \nonumber \\
+(\Delta_0\omega_k^{23}-\Delta_2\omega_k^{03}+\Delta_3\omega_k^{02})e_{023}^k \nonumber \\
+(\Delta_1\omega_k^{23}-\Delta_2\omega_k^{13}+\Delta_3\omega_k^{12})e_{123}^k\big],
\end{align}
\begin{equation}\label{3.14}
d^c\overset{3}{\omega}=\sum_k(\Delta_0\omega_k^{123}-\Delta_1\omega_k^{023}+\Delta_2\omega_k^{013}-\Delta_3\omega_k^{012})e^k,
\end{equation}
where $\Delta_i$ is the difference operator defined by
\begin{equation}\label{3.15}
\Delta_i\omega_k=\omega_{\tau_ik}-\omega_k
\end{equation}
for any components $\omega_k$ of $\overset{r}{\omega}$. For simplicity of notation we write here $\omega_k$ instead of $\omega_k^{(r)}$.

Let us now introduce in $K(4)$ a multiplication which is an analog of the
exterior multiplication for differential forms.
Denote by  $K(r)$ the $r$-dimensional complex, ${r=1,2,3}$. We define the $\cup$-multiplication by induction on $r$.
Suppose that the $\cup$-multiplication in $K(r)$ has been defined. Then we introduce it for basis elements of
$K(r+1)$ by the rule
\begin{equation}\label{3.16}
(s^k_{(p)}\otimes s^\kappa)\cup(s^k_{(q)}\otimes s^\mu)=
Q(\kappa,q)(s^k_{(p)}\cup s^k_{(q)})\otimes(s^\kappa\cup s^\mu),
\end{equation}
where $s^k_{(p)}, s^k_{(q)}\in K(r)$, $s^\kappa(s^\mu)$ is either $x^\kappa(x^\mu)$
or $e^\kappa(e^\mu)$, $\kappa, \mu\in{\mathbb Z}$, and the signum function
$Q(\kappa, q)$ is equal to $-1$ if the dimension of both elements $s^\kappa$,
$s_{(q)}^k$ is odd and to $+1$ otherwise.
For the basis elements of $K(1)=K$ the $\cup$-multiplication is defined as follows
\begin{equation*}\label{}
x^\kappa\cup x^\kappa=x^\kappa, \quad e^\kappa\cup x^{\tau\kappa}=e^\kappa,
\quad x^\kappa\cup e^\kappa=e^\kappa, \quad \kappa\in{\mathbb Z},
\end{equation*}
supposing the product to be zero in all other case. To arbitrary forms the
$\cup$-multiplication can be extended linearly.

\begin{prop}
Let $\varphi$ and $\psi$ be arbitrary forms of $K(4)$.
Then
\begin{equation}\label{3.17}
 d^c(\varphi\cup\psi)=d^c\varphi\cup\psi+(-1)^r\varphi\cup
d^c\psi,
\end{equation}
where  $r$ is the degree  of a form $\varphi$.
\end{prop}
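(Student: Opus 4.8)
The plan is to establish (\ref{3.17}) first for homogeneous basis elements $\varphi,\psi$ and then extend it to arbitrary forms by the bilinearity of $\cup$ and the linearity of $d^c$. Since both the product $\cup$ (through (\ref{3.16})) and the complex $K(4)$ itself are built from the one-dimensional complex $K$ by the recursion $K(m+1)=K(m)\otimes K$, the natural device is induction on $m$: I will prove that
\[
d^c(\varphi\cup\psi)=d^c\varphi\cup\psi+(-1)^{\deg\varphi}\,\varphi\cup d^c\psi
\]
holds for all homogeneous $\varphi,\psi\in K(m)$, for $m=1,2,3,4$; the case $m=4$ is the assertion.

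Two auxiliary facts will be used throughout. First, dualizing the Leibniz rule (\ref{3.4}) for $\partial$ by means of the defining relation (\ref{3.9}), one obtains, for $\omega\in K(m)$ and $\eta\in K$,
\[
d^c(\omega\otimes\eta)=d^c\omega\otimes\eta+(-1)^{\deg\omega}\,\omega\otimes d^c\eta .
\]
This is a brief computation with the pairing (\ref{3.8}): since $\langle a\otimes b,\ \omega\otimes\eta\rangle=\langle a,\omega\rangle\langle b,\eta\rangle$, the two tensor slots decouple and each contributes its own term, with the sign $(-1)^{\deg\omega}$ coming from the $(-1)^r$ in (\ref{3.4}). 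Second, $\cup$ is degree-additive: by (\ref{3.16}) and the list of products in $K$, the product of a $p$-form and a $q$-form is a $(p+q)$-form or is $0$. With these in hand, the base case $m=1$ is settled by direct inspection of the only nonzero products $x^\kappa\cup x^\kappa$, $x^\kappa\cup e^\kappa$, $e^\kappa\cup x^{\tau\kappa}$, using $d^c x^\kappa=e^{\sigma\kappa}-e^\kappa$ and $d^c e^\kappa=0$; in each instance both sides of (\ref{3.17}) reduce to the same element (or both vanish, a would-be $2$-form in $K$ being zero).

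For the inductive step, write $\varphi=\varphi'\otimes s^\kappa$ and $\psi=\psi'\otimes s^\mu$ with $\varphi',\psi'\in K(m)$, $s^\kappa,s^\mu\in K$, and set $\alpha=\deg\varphi'$, $\beta=\deg s^\kappa$, $\tilde\alpha=\deg\psi'$. Expanding $\varphi\cup\psi$ by (\ref{3.16}), then applying $d^c$ via the tensor formula, then the inductive hypothesis in $K(m)$ to the factor $\varphi'\cup\psi'$ and the base case in $K$ to the factor $s^\kappa\cup s^\mu$, one obtains a sum of four terms. On the other side, expanding $d^c\varphi$ and $d^c\psi$ by the tensor formula and applying (\ref{3.16}) to each of the four resulting products produces another four terms, and it remains to match the two lists term by term. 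Each match reduces to an equality of signs $(-1)^{(\cdot)}$ in which the exponents are built from $\alpha,\beta,\tilde\alpha$ together with the value of the signum function $Q$ of (\ref{3.16}); note that $Q$ equals $(-1)^{\beta\tilde\alpha}$, but becomes $(-1)^{(\beta+1)\tilde\alpha}$ when $d^c$ has acted in the $s^\kappa$-slot and $(-1)^{\beta(\tilde\alpha+1)}$ when it has acted in the $\psi'$-slot. The careful bookkeeping of these signs is the only delicate point; it closes because the extra factor $(-1)^\beta$ or $(-1)^{\tilde\alpha}$ generated by a $d^c$ in a one-dimensional slot is exactly absorbed by the corresponding shift in $Q$ (one uses that $2\beta$ and $2\tilde\alpha$ are even). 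This completes the induction and hence the proof of the proposition.
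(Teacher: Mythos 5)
Your argument is correct, but note that the paper itself offers no proof of this proposition at all: it simply refers the reader to Dezin \cite[p.147]{Dezin}. So what you have produced is a self-contained verification rather than a variant of an argument in the text; it is almost certainly the same inductive scheme that Dezin's book uses, since the recursion $K(m+1)=K(m)\otimes K$ and the recursive definition (\ref{3.16}) of $\cup$ leave little alternative. The two ingredients you isolate are exactly the right ones: the tensor-compatibility formula $d^c(\omega\otimes\eta)=d^c\omega\otimes\eta+(-1)^{\deg\omega}\omega\otimes d^c\eta$, obtained by dualizing (\ref{3.4}) through the pairing (\ref{3.9}) (your implicit point that the sign $(-1)^{\dim a}$ may be replaced by $(-1)^{\deg\omega}$ because the second term pairs nontrivially only against chains $a$ with $\dim a=\deg\omega$ is the one place a reader must pause, and it is sound), and the base case in $K$ via $d^cx^\kappa=e^{\sigma\kappa}-e^\kappa$, $d^ce^\kappa=0$, which checks out on all four product types. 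I verified the four-term matching in your inductive step: with $\alpha=\deg\varphi'$, $\beta=\deg s^\kappa$, $\tilde\alpha=\deg\psi'$, both sides produce the terms $(d^c\varphi'\cup\psi')\otimes(s^\kappa\cup s^\mu)$, $(\varphi'\cup d^c\psi')\otimes(s^\kappa\cup s^\mu)$, $(\varphi'\cup\psi')\otimes(d^cs^\kappa\cup s^\mu)$, $(\varphi'\cup\psi')\otimes(s^\kappa\cup d^cs^\mu)$ with signs $(-1)^{\beta\tilde\alpha}$, $(-1)^{\beta\tilde\alpha+\alpha}$, $(-1)^{\beta\tilde\alpha+\alpha+\tilde\alpha}$, $(-1)^{\beta\tilde\alpha+\alpha+\tilde\alpha+\beta}$ respectively, so the shifted values of $Q$ do absorb the extra factors exactly as you claim. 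The only cosmetic caveat is that the paper defines $d^c$ only on $K(4)$, so you are tacitly extending definition (\ref{3.9}) to $K(m)$ for $m<4$; this is harmless and could be stated in one sentence.
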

The proof can be found in \cite[p.147]{Dezin}. Note that Relation (\ref{3.17}) is a discrete version of the Leibniz rule for differential forms.

By definition, the coboundary operator $d^c$ and the
$\cup$-multiplication do not depend on a metric. Hence they have the same structure in
$K(4)$ as in the case of the combinatorial Euclidean space.
 At the same time, to define a discrete analog of the Hodge star operator
$\ast$ we must take into account the  Lorentz metric structure  on $K(4)$.
Define the operation $\ast: K^r(4)\rightarrow K^{4-r}(4)$ for an arbitrary basis element $s^k=s^{k_0}\otimes  s^{k_1}\otimes s^{k_2}\otimes s^{k_3}$ by the rule
\begin{equation}\label{3.18}
 s^k\cup\ast s^k=Q(k_0) e^{k},
\end{equation}
where $Q(k_0)$ is equal to $+1$ if $s^{k_0}=x^{k_0}$ and
to $-1$ if  $s^{k_0}= e^{k_0}$.
 For example, for the 1-dimensional basis elements
 $e_i^k$  we have $e_0^k\cup\ast e_0^k=-e^k$ and  $e_i^k\cup\ast e_i^k=e^k$ \ for $i=1,2,3$. Recall that $e^k=e^{k_0}\otimes e^{k_1}\otimes e^{k_2}\otimes e^{k_3}$ is the 4-dimensional basic element of $K(4)$. Relation (\ref{3.18})  preserves the  Lorentz  signature of metric in our  discrete model. From (\ref{3.18}) we obtain
\begin{align}\label{3.19, 3.20}
\ast x^k=\ast(x^{k_0}\otimes x^{k_1}\otimes x^{k_2}\otimes x^{k_3})=e^k, \\
\ast e^k=-x^{\tau k_0}\otimes x^{\tau k_1}\otimes x^{\tau k_2}\otimes x^{\tau k_3}=-x^{\tau k},
\end{align}
\begin{equation}\label{3.21}
\ast e_0^k=-e_{123}^{\tau_0 k}, \qquad \ast e_1^k=-e_{023}^{\tau_1 k}, \qquad
\ast e_2^k=e_{013}^{\tau_2 k}, \qquad \ast e_3^k=-e_{012}^{\tau_3 k},
\end{equation}
\begin{align}\label{3.22}
\ast e_{01}^k&=-e_{23}^{\tau_{01} k}, \qquad \ast e_{02}^k=e_{13}^{\tau_{02} k}, \qquad \ast e_{03}^k=-e_{12}^{\tau_{03} k}, \nonumber \\
\ast e_{12}^k&=e_{03}^{\tau_{12} k}, \qquad  \ast e_{13}^k=-e_{02}^{\tau_{13} k}, \qquad \ast e_{23}^k=e_{01}^{\tau_{23} k},
\end{align}
\begin{equation}\label{3.23}
\ast e_{012}^k=-e_3^{\tau_{012} k}, \quad \ast e_{013}^k=e_2^{\tau_{013} k}, \quad
\ast e_{023}^k=-e_1^{\tau_{023} k}, \quad \ast e_{123}^k=-e_0^{\tau_{123} k}.
\end{equation}
Here we use $\tau_{ij}$ and $\tau_{ijl}$ to denote the operators which shift to the right the indicated components of $k=(k_0,k_1,k_2,k_3)$, for example,
\begin{equation*}
\tau_{12}k=(k_0,\tau k_1,\tau k_2,k_3), \quad
 \tau_{023}k=(\tau k_0,k_1,\tau k_2,\tau k_3),
  \end{equation*}
  and $\tau k=(\tau k_0,\tau k_1,\tau k_2,\tau k_3)$.
  It is easy to check that
 \begin{equation*}
\ast\ast s^k_{(r)}=(-1)^{r+1}s^{\tau k}_{(r)},
 \end{equation*}
 where $s^k_{(r)}$ is an $r$-dimensional basic element of $K(4)$.
 Then if we perform the $\ast$ operation twice on any $r$-form $\overset{r}{\omega}\in K(4)$,  we  obtain
 \begin{equation*}
\ast\ast\overset{r}{\omega}=\ast\ast\sum_k\sum_{(r)}\omega_k^{(r)} s^k_{(r)}=(-1)^{r+1}\sum_k\sum_{(r)}\omega_k^{(r)}s^{\tau k}_{(r)}=
(-1)^{r+1}\sum_k\sum_{(r)}\omega_{\sigma k}^{(r)}s^k_{(r)},
 \end{equation*}
where   $\sigma k=(\sigma k_0,\sigma k_1,\sigma k_2,\sigma k_3)$.

Let $V\subset C(4)$ be some fixed "domain" of the complex $C(4)$.
We can written $V$ as follows
\begin{equation}\label{3.24}
V=\sum_ke_k, \qquad k=(k_0,k_1,k_2,k_3),\qquad k_i=1,2, ...,N_i,
\end{equation}
where $e_k=e_{k_0}\otimes e_{k_1}\otimes e_{k_2}\otimes e_{k_3}$
is the 4-dimensional basis element of $C(4)$. We agree that
in what follows the subscripts $k_i,\  i=0,1,2,3$, always run the set
of values indicated in (\ref{3.24}).
Suppose that the  forms (\ref{3.5})--(\ref{3.7}) are vanished on $C(4)\setminus V$, i.e., if $k_i<1$ or $k_i>N_i$ then  $\omega_k^{(r)}=0$ for any $r$-form $\overset{r}{\omega}\in K(4)$.
 For forms
$\overset{r}{\varphi}, \ \overset{r}{\omega}\in K^r(4)$ of the same degree $r$ the inner product is defined by the relation
\begin{equation}\label{3.25}
(\overset{r}{\varphi}, \ \overset{r}{\omega})_V=\langle V, \ \overset{r}{\varphi}\cup\ast\overset{r}{\omega}\rangle.
\end{equation}
For the forms of different degrees the product (\ref{3.25}) is set equal to zero. See also \cite{S1}.
The definition imitates correctly the continual case (\ref{2.1}). Using (\ref{3.8}) and (\ref{3.19, 3.20})--(\ref{3.23})
 we obtain
\begin{align*}\label{}
(\overset{0}{\omega}, \ \overset{0}{\omega})_V&=\sum_k(\overset{0}{\omega}_k)^2, \\
(\overset{1}{\omega}, \ \overset{1}{\omega})_V&=\sum_k\big[-(\omega_k^0)^2+(\omega_k^1)^2+(\omega_k^2)^2+(\omega_k^3)^2\big], \\
(\overset{2}{\omega}, \ \overset{2}{\omega})_V&=\sum_k\big[-(\omega_k^{01})^2-(\omega_k^{02})^2-(\omega_k^{03})^2+(\omega_k^{12})^2+(\omega_k^{13})^2+
(\omega_k^{23})^2\big], \\
(\overset{3}{\omega}, \ \overset{3}{\omega})_V&=\sum_k\big[-(\omega_k^{012})^2-(\omega_k^{013})^2-(\omega_k^{023})^2+(\omega_k^{123})^2\big], \\
(\overset{4}{\omega}, \ \overset{4}{\omega})_V&=-\sum_k(\overset{4}{\omega}_k)^2.
\end{align*}
\begin{prop}
 Let $\varphi\in K^r(4)$  and $\psi\in K^{r+1}(4)$. Then we have
\begin{equation}\label{3.26}
 (d^c\varphi, \ \psi)_V=(\varphi, \ \delta^c\psi)_V,
\end{equation}
 where
 \begin{equation}\label{3.27}
 \delta^c\psi=(-1)^{r+1}\ast^{-1}d^c\ast\psi
 \end{equation}
  is the
operator formally adjoint of $d^c$.
\end{prop}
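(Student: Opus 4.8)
The plan is to reproduce, in the discrete setting, the standard continuum argument that $\delta=-\ast d\ast$ is the formal adjoint of $d$: one combines the discrete Leibniz rule (\ref{3.17}) with the discrete Stokes identity built into the defining relation (\ref{3.9}) of $d^c$. Fix $\varphi\in K^r(4)$ and $\psi\in K^{r+1}(4)$, so that $\ast\psi\in K^{3-r}(4)$, $\varphi\cup\ast\psi\in K^3(4)$, and $\delta^c\psi\in K^r(4)$; in particular both sides of (\ref{3.26}) are well defined. Applying (\ref{3.17}) to the product $\varphi\cup\ast\psi$, with $\varphi$ of degree $r$, gives
\begin{equation*}
d^c(\varphi\cup\ast\psi)=d^c\varphi\cup\ast\psi+(-1)^r\,\varphi\cup d^c\ast\psi .
\end{equation*}
Pairing with the $4$-chain $V$ and invoking (\ref{3.9}) in the form $\langle V,\ d^c(\varphi\cup\ast\psi)\rangle=\langle\partial V,\ \varphi\cup\ast\psi\rangle$, we obtain
\begin{equation*}
(d^c\varphi,\ \psi)_V=\langle\partial V,\ \varphi\cup\ast\psi\rangle-(-1)^r\langle V,\ \varphi\cup d^c\ast\psi\rangle ,
\end{equation*}
where the left-hand side is the inner product by definition (\ref{3.25}).

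Next I would show that the boundary term $\langle\partial V,\ \varphi\cup\ast\psi\rangle$ vanishes. Since $V=\sum_k e_k$ with each $k_i$ running over $\{1,\dots,N_i\}$, the rule (\ref{3.4}) makes $\partial V$ telescope to a sum of $3$-dimensional cells, each carrying exactly one tensor factor of the form $x_1$ or $x_{N_i+1}$ in some slot $i$ and $1$-dimensional factors $e_{k_j}$ elsewhere. Pairing such a cell with $\varphi\cup\ast\psi$ and using the base relations of the $\cup$-product ($x^\kappa\cup x^\kappa=x^\kappa$, $e^\kappa\cup x^{\tau\kappa}=e^\kappa$, $x^\kappa\cup e^\kappa=e^\kappa$) forces both $\varphi$ and $\ast\psi$ to carry an $x$-factor in that slot at that same index. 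If the index there is $N_i+1$, the corresponding component of $\varphi$ vanishes because $N_i+1>N_i$; if it is $1$, then by the star formulas (\ref{3.21})--(\ref{3.23}) the factor $\ast\psi$ reads off a component of $\psi$ having an $e$-factor with index $0$ in slot $i$, which vanishes by the support convention stated just after (\ref{3.24}). Hence every cell of $\partial V$ contributes $0$, so $\langle\partial V,\ \varphi\cup\ast\psi\rangle=0$. This summation-by-parts bookkeeping — tracking how $\cup$ and $\ast$ shift multi-indices near the faces $k_i=1$ and $k_i=N_i$ — is the one genuinely technical point of the proof; everything else is formal algebra.

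Finally I would rewrite the surviving term using (\ref{3.27}). Applying $\ast$ to $\delta^c\psi=(-1)^{r+1}\ast^{-1}d^c\ast\psi$ gives $\ast\delta^c\psi=(-1)^{r+1}d^c\ast\psi$, hence $d^c\ast\psi=(-1)^{r+1}\ast\delta^c\psi$ (only the invertibility of $\ast$ is used here, which is clear from (\ref{3.18}) since $\tau$ is invertible on ${\mathbb Z}$). Substituting and collecting the sign, $-(-1)^r(-1)^{r+1}=-(-1)^{2r+1}=1$, so
\begin{equation*}
(d^c\varphi,\ \psi)_V=\langle V,\ \varphi\cup\ast\delta^c\psi\rangle=(\varphi,\ \delta^c\psi)_V ,
\end{equation*}
which is (\ref{3.26}). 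As an independent check one could instead verify the identity degree by degree in components, using (\ref{3.11})--(\ref{3.14}) together with the explicit inner products listed just before the proposition and summing by parts in each difference variable, but the argument above makes the discrete Stokes mechanism transparent and avoids that computation.
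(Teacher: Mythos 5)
Your proof is correct and takes essentially the same route as the paper: the discrete Leibniz rule (\ref{3.17}) together with the duality (\ref{3.9}) produces the boundary term $\langle\partial V,\ \varphi\cup\ast\psi\rangle$, which vanishes by the support convention near $k_i=1$ and $k_i=N_i$, and the definition (\ref{3.27}) of $\delta^c$ then absorbs the sign $-(-1)^r(-1)^{r+1}=1$. The only difference is presentational: the paper writes out $\partial e_k$ and $\partial V$ explicitly and lists the surviving products $\varphi_{k_0\ldots\tau N_i\ldots k_3}\psi_{k_0\ldots N_i\ldots k_3}$ and $\varphi_{k_0\ldots1\ldots k_3}\psi_{k_0\ldots0\ldots k_3}$, where you describe the same telescoping and index-shift bookkeeping in words.
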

\begin{proof} From (\ref{3.9}), (\ref{3.17}) and (\ref{3.25}) we obtain
\begin{align*} (d^c\varphi,\
\psi)_V&=\langle V, \ d^c\varphi\cup\ast\psi\rangle=\langle V, \
 (d^c(\varphi\cup\ast\psi)-(-1)^r \varphi\cup d^c\ast\psi)\rangle\\
&=\langle\partial V, \ \varphi\cup\ast\psi\rangle+(-1)^{r+1}
\langle V, \ \varphi\cup\ast(\ast^{-1}d^c\ast\psi)\rangle\\
&=\langle\partial V,\
\varphi\cup\ast\psi\rangle+(-1)^{r+1}(\varphi, \ \ast^{-1}d^c\ast\psi)_V,
\end{align*}
where we used $\ast\ast^{-1}=1$.
It remains to prove that ${\langle\partial V, \ \varphi\cup\ast\psi\rangle=0}$.
Using (\ref{3.4}) we can calculate
\begin{align*}
\partial e_k&=\partial (e_{k_0}\otimes e_{k_1}\otimes e_{k_2}\otimes e_{k_3})\\
&=e_{\tau_0k}^{123}-e_k^{123}-e_{\tau_1k}^{023}+e_k^{023}+e_{\tau_2k}^{013}-e_k^{013}-e_{\tau_3k}^{012}+e_k^{012}.
\end{align*} From this we obtain
\begin{align*}
\partial V=\sum_k \big(e_{\tau N_0k_1k_2k_3}^{123}-
e_{1k_1k_2k_3}^{123}-e_{k_1\tau N_1k_2k_3}^{023}+
e_{k_01k_2k_3}^{023}\\
+e_{k_0k_1\tau N_2k_3}^{013}-
e_{k_0k_11k_3}^{013}-e_{k_0k_1k_2\tau N_3}^{012}+
e_{k_0k_1k_21}^{012}\big).
\end{align*}
Then if we compute the 3-form $\varphi\cup\ast\psi$ on $\partial V$, we obtain the expression   which consists of only the terms
\begin{equation*}
\varphi_{k_0...\tau N_i...k_3}\psi_{k_0...N_i...k_3}, \qquad
\varphi_{k_0...1...k_3}\psi_{k_0...0...k_3},
\end{equation*}
where $\varphi_{k_0k_1k_2k_3}=\varphi_k$ and  $\psi_{k_0k_1k_2k_3}=\psi_k$ are components of the forms $\varphi$  and  $\psi$.
Since by assumption, we have $\varphi_{k_0...\tau N_i...k_3}=\psi_{k_0...0...k_3}=0$  for all $i=0,1,2,3$, it follows that
 ${\langle\partial V, \ \varphi\cup\ast\psi\rangle=0}$.
\end{proof}
The operator $\delta^c: K^{r+1}(4) \rightarrow K^r(4)$ is a discrete analog of the codifferential $\delta$. For the 0-form (\ref{3.5}) we have $\delta^c\overset{0}{\omega}=0$.
 Using (\ref{3.11})--(\ref{3.14}) and (\ref{3.26}) we can calculate
\begin{equation}\label{3.28}
\delta^c\overset{1}{\omega}=\sum_k(-\Delta_0\omega_{\sigma_0k}^{0}+\Delta_1\omega_{\sigma_1k}^{1}+\Delta_2\omega_{\sigma_2k}^{2}+\Delta_3\omega_{\sigma_3k}^{3})x^k,
\end{equation}
\begin{align}\label{3.29} \nonumber
\delta^c\overset{2}{\omega}=\sum_k\big[(\Delta_1\omega_{\sigma_1k}^{01}+\Delta_2\omega_{\sigma_2k}^{02}+\Delta_3\omega_{\sigma_3k}^{03})e_{0}^k\\ \nonumber
+(\Delta_0\omega_{\sigma_0k}^{01}+\Delta_2\omega_{\sigma_2k}^{12}+\Delta_3\omega_{\sigma_3k}^{13})e_{1}^k\\ \nonumber
+(\Delta_0\omega_{\sigma_0k}^{02}-\Delta_1\omega_{\sigma_1k}^{12}+\Delta_3\omega_{\sigma_3k}^{23})e_{2}^k\\
+(\Delta_0\omega_{\sigma_0k}^{03}-\Delta_1\omega_{\sigma_1k}^{13}-\Delta_2\omega_{\sigma_2k}^{23})e_{3}^k\big],
\end{align}
\begin{align}\label{3.30} \nonumber
\delta^c\overset{3}{\omega}=\sum_k\big[(\Delta_2\omega_{\sigma_2k}^{012}+\Delta_3\omega_{\sigma_3k}^{013})e_{01}^k+
(-\Delta_1\omega_{\sigma_1k}^{012}+\Delta_3\omega_{\sigma_3k}^{023})e_{02}^k\\ \nonumber
+(-\Delta_1\omega_{\sigma_1k}^{013}-\Delta_2\omega_{\sigma_2k}^{023})e_{03}^k
+(-\Delta_0\omega_{\sigma_0k}^{012}+\Delta_3\omega_{\sigma_3k}^{123})e_{12}^k\\
+(-\Delta_0\omega_{\sigma_0k}^{013}-\Delta_2\omega_{\sigma_2k}^{123})e_{13}^k
+(-\Delta_0\omega_{\sigma_0k}^{023}+\Delta_1\omega_{\sigma_1k}^{123})e_{23}^k\big],
\end{align}
\begin{align}\label{3.31}
\delta^c\overset{4}{\omega}=\sum_k\big[-(\Delta_3\omega_{\sigma_3k}^{4})e_{012}^k+(\Delta_2\omega_{\sigma_2k}^{4})e_{013}^k\\ \nonumber
-(\Delta_1\omega_{\sigma_1k}^{4})e_{023}^k-(\Delta_0\omega_{\sigma_0k}^{4})e_{123}^k\big].
\end{align}
It is obvious that
$\delta^c\delta^c\overset{r}{\omega}=0$ \ for any $r=1,2,3,4.$

A discrete analog of the Laplace-Beltrami operator  (\ref{2.2}) is defined by
\begin{equation}\label{3.32}
\Delta^c=-(d^c\delta^c+\delta^cd^c): \ K^r(4) \rightarrow K^r(4).
\end{equation}
We have
\begin{equation}\label{3.33}
-(d^c\delta^c+\delta^cd^c)=(d^c-\delta^c)^2.
\end{equation}
\begin{prop}
 The operator $\Delta^c$ is self adjoint for all forms supported in $V$, i.e.,
\begin{equation}\label{3.34}
 (\Delta^c\varphi, \ \psi)_V=(\varphi, \ \Delta^c\psi)_V.
\end{equation}
 \end{prop}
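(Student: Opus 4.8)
The strategy is to deduce (\ref{3.34}) from the adjointness relation (\ref{3.26}) of Proposition~3.6, together with the factorization $\Delta^c=-(d^c\delta^c+\delta^cd^c)$ and the symmetry of the bilinear form $(\cdot,\cdot)_V$. Symmetry is immediate from (\ref{3.25}) and the explicit component formulas for $(\overset{r}{\omega},\overset{r}{\omega})_V$ listed above: polarizing those quadratic forms yields a symmetric bilinear form, equivalently $\langle V,\varphi\cup\ast\psi\rangle=\langle V,\psi\cup\ast\varphi\rangle$ on basis elements. Given this, it suffices to show that each of the operators $d^c\delta^c$ and $\delta^cd^c$ is self-adjoint on forms supported in $V$.

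For $d^c\delta^c$: if $\varphi,\psi\in K^r(4)$ are supported in $V$, then $\delta^c\varphi\in K^{r-1}(4)$, and (\ref{3.26}), applied with $d^c$ acting on the $(r-1)$-form $\delta^c\varphi$ and paired against the $r$-form $\psi$, gives $(d^c\delta^c\varphi,\psi)_V=(\delta^c\varphi,\delta^c\psi)_V$. Using symmetry of $(\cdot,\cdot)_V$ and then (\ref{3.26}) once more in the opposite direction,
\[
(d^c\delta^c\varphi,\psi)_V=(\delta^c\varphi,\delta^c\psi)_V=(\delta^c\psi,\delta^c\varphi)_V=(d^c\delta^c\psi,\varphi)_V=(\varphi,d^c\delta^c\psi)_V .
\]
The analogous chain, this time moving $\delta^c$ across $d^c$ by (\ref{3.26}) and symmetry so that $(\delta^cd^c\varphi,\psi)_V=(d^c\varphi,d^c\psi)_V$, shows $\delta^cd^c$ is self-adjoint as well. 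Adding the two identities and multiplying by $-1$ yields (\ref{3.34}); equivalently one arrives at the manifestly symmetric expression $(\Delta^c\varphi,\psi)_V=-(d^c\varphi,d^c\psi)_V-(\delta^c\varphi,\delta^c\psi)_V$. The same conclusion can be phrased through (\ref{3.33}): relation (\ref{3.26}) makes $d^c-\delta^c$ formally skew-adjoint with respect to $(\cdot,\cdot)_V$, and the square of a skew-adjoint operator is self-adjoint.

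The point that is not mere formalism is that (\ref{3.26}) is being invoked on the \emph{derived} forms $d^c\varphi$ and $\delta^c\varphi$ rather than on $\varphi$ itself. The proof of Proposition~3.6 relied on the vanishing of the boundary pairing $\langle\partial V,\,\cdot\cup\ast\,\cdot\,\rangle$, which there followed from the two forms vanishing on the faces $k_i=0$ and $k_i=N_i+1$; but although $\varphi$ and $\psi$ are supported in $V$, the forms $d^c\varphi$ and $\delta^c\varphi$ generically pick up nonzero components on one of these boundary faces. Hence the genuine work is to check that these boundary pairings still vanish — equivalently, that the boundary contributions generated by the $d^c\delta^c$ term and by the $\delta^cd^c$ term cancel in the difference $(\Delta^c\varphi,\psi)_V-(\varphi,\Delta^c\psi)_V$. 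This is carried out by the same explicit computation of $\partial V$ and of which edge components survive the pairing as in the proof of Proposition~3.6, and it is the main obstacle; everything else is the formal shuffling above.
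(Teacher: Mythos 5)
The formal half of your argument is exactly the paper's entire proof (the author disposes of (\ref{3.34}) with the single sentence ``By (\ref{3.26}) it is obvious''), and you are right to be suspicious of that step: the adjointness relation (\ref{3.26}) was proved only under the hypothesis that both forms have components vanishing outside $V$, whereas $d^c\varphi$ acquires nonzero components on the faces $k_i=0$ and $\delta^c\varphi$ on the faces $k_i=N_i+1$. The problem is that, having correctly identified this as ``the main obstacle'', you do not overcome it: you merely assert that ``the same explicit computation'' as in the proof of (\ref{3.26}) settles it, and the two ways you suggest it could settle are not correct as stated.

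Concretely, the individual boundary pairings do \emph{not} vanish. Take $\varphi,\psi$ of degree $0$ supported in $V$ and apply (\ref{3.26}) to the pair $(\psi,\,d^c\varphi)$: the surviving boundary terms are of the form $\psi_{k_0\ldots1\ldots k_3}\,(d^c\varphi)_{k_0\ldots0\ldots k_3}$, and the $i$-th component of $d^c\varphi$ at $k_i=0$ equals $\varphi$ at $k_i=1$, which is generically nonzero. Hence $(\delta^cd^c\varphi,\psi)_V\neq(d^c\varphi,d^c\psi)_V$ in general, and likewise your displayed chain $(d^c\delta^c\varphi,\psi)_V=(\delta^c\varphi,\delta^c\psi)_V=\cdots$ and the ``manifestly symmetric expression'' $(\Delta^c\varphi,\psi)_V=-(d^c\varphi,d^c\psi)_V-(\delta^c\varphi,\delta^c\psi)_V$ are false as identities. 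Nor can the rescue be a cancellation between the $d^c\delta^c$ and the $\delta^cd^c$ contributions: on $0$-forms $d^c\delta^c=0$ identically, yet the offending boundary term above is still present. What actually makes (\ref{3.34}) true is that the leftover boundary sums are \emph{symmetric under interchange of $\varphi$ and $\psi$} (in degree $0$ they reduce to sums of products of $\varphi$ and $\psi$ evaluated on the inner faces $k_i=1$, and similarly on the outer faces for the $\delta^c$ part), so they cancel between $(\Delta^c\varphi,\psi)_V$ and $(\varphi,\Delta^c\psi)_V$, not within either side separately. A complete proof must exhibit this symmetry by redoing the $\partial V$ computation with $d^c\varphi$, $\delta^c\varphi$ in the appropriate slots for each degree $r$, or else verify (\ref{3.34}) directly from the componentwise second-difference form of $\Delta^c$ under the support condition. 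As it stands, the decisive step of your argument is announced but missing --- though, to be fair, the paper's one-line proof glosses over precisely the same point.
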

\begin{proof}
By (\ref{3.26}) it is obvious.
 \end{proof}
 Due to the Lorentz signature of the metric of $K(4)$ the operator $\Delta^c$ is a discrete analog of the d'Alembert (wave) operator.
 Then for any $r$-form $\overset{r}{\omega}$ a discrete analog of the Klein-Gordon  equation (\ref{2.3}) can be written as
 \begin{equation}\label{3.35}
 \Delta^c\overset{r}{\omega}=m^2\overset{r}{\omega},
\end{equation}
where $m$  is a mass parameter.
 \section{Discrete Dirac-K\"{a}hler equation}
Define a discrete inhomogeneous form as follows
\begin{equation}\label{4.1}
\Omega=\sum_{r=0}^4\overset{r}{\omega},
\end{equation}
where $\overset{r}{\omega}$ is given by (\ref{3.5})--(\ref{3.7}).
Let us consider the  discrete Klein-Gordon  equation on these inhomogeneous forms:
\begin{equation}\label{4.2}
 \Delta^c\Omega=m^2\Omega.
\end{equation}
It is clear that Equation (\ref{4.2}) is equivalent to five equations (\ref{3.35}) for $r=0,1,2,3,4$. Using (\ref{3.33}) it is possible to  introduce a discrete analog of the Dirac-K\"{a}hler equation (\ref{2.4}) by the rule
\begin{equation}\label{4.3}
(d^c-\delta^c)\Omega=m\Omega.
\end{equation}
We can write this equation more explicitly by separating its homogeneous components as
\begin{align}\label{4.4} \nonumber
-\delta^c\overset{1}{\omega}=m\overset{0}{\omega},\\ \nonumber
d^c\overset{0}{\omega}-\delta^c\overset{2}{\omega}=m\overset{1}{\omega},\\
d^c\overset{1}{\omega}-\delta^c\overset{3}{\omega}=m\overset{2}{\omega},\\ \nonumber
d^c\overset{2}{\omega}-\delta^c\overset{4}{\omega}=m\overset{3}{\omega},\\ \nonumber
d^c\overset{3}{\omega}=m\overset{4}{\omega}.
\end{align}
Using (\ref{3.11})--(\ref{3.14}) and (\ref{3.28})--(\ref{3.31})  equations (\ref{4.4}) can be written as the set of following difference equations
\begin{align*}\label{}
\Delta_0\omega_{\sigma_0k}^{0}-\Delta_1\omega_{\sigma_1k}^{1}-\Delta_2\omega_{\sigma_2k}^{2}-\Delta_3\omega_{\sigma_3k}^{3}=m\overset{0}{\omega}_k,\\
\Delta_0\overset{0}{\omega}_k-\Delta_1\omega_{\sigma_1k}^{01}-\Delta_2\omega_{\sigma_2k}^{02}-\Delta_3\omega_{\sigma_3k}^{03}=m\omega_k^0,\\
\Delta_1\overset{0}{\omega}_k-\Delta_0\omega_{\sigma_0k}^{01}-\Delta_2\omega_{\sigma_2k}^{12}-\Delta_3\omega_{\sigma_3k}^{13}=m\omega_k^1,\\
\Delta_2\overset{0}{\omega}_k-\Delta_0\omega_{\sigma_0k}^{02}+\Delta_1\omega_{\sigma_1k}^{12}-\Delta_3\omega_{\sigma_3k}^{23}=m\omega_k^2,\\
\Delta_3\overset{0}{\omega}_k-\Delta_0\omega_{\sigma_0k}^{03}+\Delta_1\omega_{\sigma_1k}^{13}+\Delta_2\omega_{\sigma_2k}^{23}=m\omega_k^3,\\
\Delta_0\omega_k^1-\Delta_1\omega_k^0-\Delta_2\omega_{\sigma_2k}^{012}-\Delta_3\omega_{\sigma_3k}^{013}=m\omega_k^{01},\\
\Delta_0\omega_k^2-\Delta_2\omega_k^0+\Delta_1\omega_{\sigma_1k}^{012}-\Delta_3\omega_{\sigma_3k}^{023}=m\omega_k^{02},\\
\Delta_0\omega_k^3-\Delta_3\omega_k^0+\Delta_1\omega_{\sigma_1k}^{013}+\Delta_2\omega_{\sigma_2k}^{023}=m\omega_k^{03},\\
\Delta_1\omega_k^2-\Delta_2\omega_k^1+\Delta_0\omega_{\sigma_0k}^{012}-\Delta_3\omega_{\sigma_3k}^{123}=m\omega_k^{12},\\
\Delta_1\omega_k^3-\Delta_3\omega_k^1+\Delta_0\omega_{\sigma_0k}^{013}+\Delta_2\omega_{\sigma_2k}^{123}=m\omega_k^{13},\\
\Delta_2\omega_k^3-\Delta_3\omega_k^2+\Delta_0\omega_{\sigma_0k}^{023}-\Delta_1\omega_{\sigma_1k}^{123}=m\omega_k^{23},\\
\Delta_0\omega_k^{12}-\Delta_1\omega_k^{02}+\Delta_2\omega_k^{01}+\Delta_3\overset{4}{\omega}_{\sigma_3k}=m\omega_k^{012},\\
\Delta_0\omega_k^{13}-\Delta_1\omega_k^{03}+\Delta_3\omega_k^{01}-\Delta_2\overset{4}{\omega}_{\sigma_2k}=m\omega_k^{013},\\
\Delta_0\omega_k^{23}-\Delta_2\omega_k^{03}+\Delta_3\omega_k^{02}+\Delta_1\overset{4}{\omega}_{\sigma_1k}=m\omega_k^{023},\\
\Delta_1\omega_k^{23}-\Delta_2\omega_k^{13}+\Delta_3\omega_k^{12}+\Delta_0\overset{4}{\omega}_{\sigma_0k}=m\omega_k^{123},\\
\Delta_0\omega_k^{123}-\Delta_1\omega_k^{023}+\Delta_2\omega_k^{013}-\Delta_3\omega_k^{012}=m\overset{4}{\omega}_k.
\end{align*}

The inner product (\ref{3.25}) in $K^r(4)$ can be extended to an inner product of inhomogeneous forms   by the rule
\begin{equation}\label{4.5}
(\Omega, \ \Phi)_V=\sum_{r=0}^4(\overset{r}{\omega}, \ \overset{r}{\varphi})_V,
\end{equation}
where $\Omega$,  $\Phi$  are given by (\ref{4.1}).
Given the inner product of inhomogeneous forms, we can consider the Dirac type operator
\begin{equation}\label{4.6}
D^c_{\pm}\equiv d^c\pm\delta^c: K(4)\rightarrow K(4)
\end{equation}
and its formally adjoint. Note that the operator (\ref{4.6}) does not respect the degree of forms because  $d^c: K^r(4) \rightarrow K^{r+1}(4)$ and  $\delta^c: K^{r+1}(4) \rightarrow K^r(4)$. Thus it is not possible to define  $D^c_{\pm}$ on $K^r(4)$, i.e., on homogeneous $r$-forms.

\begin{prop}
  The operator $D^c_{+}$ is self adjoint and the operator $D^c_{-}$ is anti self adjoint  with respect to the inner product (\ref{4.5}),
i.e.,
\begin{equation}\label{4.7}
(D^c_{\pm}\Omega, \ \Phi)_V=\pm(\Omega, \ D^c_{\pm}\Phi)_V.
\end{equation}
\end{prop}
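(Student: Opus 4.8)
The plan is to derive the adjointness of $D^c_{\pm}$ directly from Proposition stated in (\ref{3.26}), which already establishes that $d^c$ and $\delta^c$ are mutually adjoint in the pairing between consecutive degrees. The essential point is to be careful with the grading: although $D^c_{\pm}$ does not preserve the degree of a homogeneous form, it acts on the direct sum $K(4)=\bigoplus_{r=0}^4 K^r(4)$, and the extended inner product (\ref{4.5}) is block-diagonal with respect to this grading. So the computation reduces to tracking how the homogeneous components of $D^c_{\pm}\Omega$ pair against the homogeneous components of $\Phi$.

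First I would write $\Omega=\sum_{r=0}^4\overset{r}{\omega}$ and $\Phi=\sum_{r=0}^4\overset{r}{\varphi}$, and observe that $D^c_{\pm}\Omega$ has homogeneous component of degree $r$ equal to $d^c\overset{r-1}{\omega}\pm\delta^c\overset{r+1}{\omega}$ (with the obvious conventions that $d^c$ applied to a $(-1)$-form or $\delta^c$ applied to a $5$-form is zero). By (\ref{4.5}), the only surviving terms in $(D^c_{\pm}\Omega,\Phi)_V$ are
\begin{equation*}
(D^c_{\pm}\Omega, \ \Phi)_V=\sum_{r=0}^4\big(d^c\overset{r-1}{\omega}, \ \overset{r}{\varphi}\big)_V\pm\sum_{r=0}^4\big(\delta^c\overset{r+1}{\omega}, \ \overset{r}{\varphi}\big)_V.
\end{equation*}
Then I would apply (\ref{3.26}) to each summand: in the first sum $(d^c\overset{r-1}{\omega},\overset{r}{\varphi})_V=(\overset{r-1}{\omega},\delta^c\overset{r}{\varphi})_V$, and in the second sum the adjoint relation used the other way gives $(\delta^c\overset{r+1}{\omega},\overset{r}{\varphi})_V=(\overset{r+1}{\omega},d^c\overset{r}{\varphi})_V$. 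Reindexing each sum then regroups the terms into $\sum_{r}(\overset{r}{\omega}, d^c\overset{r-1}{\varphi}\pm\delta^c\overset{r+1}{\varphi})_V$, after pulling the sign $\pm$ through; collecting the homogeneous components of $D^c_{\pm}\Phi$ yields $\pm(\Omega, D^c_{\pm}\Phi)_V$, which is exactly (\ref{4.7}).

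The only point requiring genuine care — and the place where the $\pm$ sign is actually born — is the bookkeeping of which term picks up a sign when the adjoint relation (\ref{3.26}) is invoked in the "reverse" direction for $\delta^c$ versus the "forward" direction for $d^c$. Concretely, $(d^c\varphi,\psi)_V=(\varphi,\delta^c\psi)_V$ moves $d^c$ to $\delta^c$ with no sign, while reading the same identity as $(\delta^c\psi,\varphi)_V=(\psi,d^c\varphi)_V$ moves $\delta^c$ to $d^c$ also with no sign; hence both pieces of $D^c_{\pm}=d^c\pm\delta^c$ transfer cleanly and the overall factor is simply $\pm$, giving self-adjointness for $D^c_+$ and anti-self-adjointness for $D^c_-$. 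I would also remark that no boundary terms intervene beyond those already shown to vanish in the proof of Proposition (\ref{3.26}), since all forms are supported in $V$ by assumption. This is routine once the grading is laid out, so I do not anticipate a substantive obstacle.
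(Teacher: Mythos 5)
Your argument is correct and is essentially the paper's own proof: expand $(D^c_{\pm}\Omega,\Phi)_V$ into its homogeneous blocks using (\ref{4.5}), apply the adjointness relation (\ref{3.26}) to each term (in both directions, using the symmetry of the inner product), and regroup to obtain $\pm(\Omega, D^c_{\pm}\Phi)_V$. The paper does exactly this, writing out the eight surviving terms explicitly and citing (\ref{3.26}); your sign bookkeeping and reindexing are just a more detailed rendering of the same step.
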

\begin{proof}
We have
\begin{align*}\label{}
(D^c_{\pm}\Omega, \ \Phi)_V&=(d^c\overset{0}{\omega}, \ \overset{1}{\varphi})_V +(d^c\overset{1}{\omega}, \ \overset{2}{\varphi})_V+
(d^c\overset{2}{\omega}, \ \overset{3}{\varphi})_V+(d^c\overset{3}{\omega}, \ \overset{4}{\varphi})_V\\
&\pm(\delta^c\overset{1}{\omega}, \ \overset{0}{\varphi})_V\pm(\delta^c\overset{2}{\omega}, \ \overset{1}{\varphi})_V
\pm(\delta^c\overset{3}{\omega}, \ \overset{2}{\varphi})_V\pm(\delta^c\overset{4}{\omega}, \ \overset{3}{\varphi})_V.
\end{align*}
By (\ref{3.26}) this  implies (\ref{4.7}) immediately.
 \end{proof}

\section{Decomposition of the discrete Dirac-K\"{a}hler equation}
Let us introduce a discrete inhomogeneous form of type
\begin{equation}\label{5.1}
\overset{r\tau r}{\Omega}=\overset{r}{\omega}+\overset{\tau r}{\omega},
\end{equation}
where $\overset{r}{\omega}$ is an $r$-form of type (\ref{3.5})--(\ref{3.7}) and $\tau$ is given by (\ref{3.1}).
By analogy with the continual case \cite{Cantoni}  we define a discrete analog of the Duffin type equation by the rule
\begin{equation}\label{5.2}
(d^c-\delta^c)\overset{r\tau r}{\Omega}=m\overset{r\tau r}{\Omega}.
\end{equation}
This equation is equivalent to the following two equations
\begin{align}\label{5.3} \nonumber
d^c\overset{r}{\omega}=m\overset{\tau r}{\omega},\\
-\delta^c\overset{\tau r}{\omega}=m\overset{r}{\omega}.
\end{align}
For $r=0$ we have a discrete analog of the scalar Duffin equation and for $r=1$ we have a discrete analog of the vector Duffin equation (or the Proca equation). Similarly, if  $r=2$ and $r=3$ we obtain discrete analogs of the pseudovector and pseudoscalar Duffin equations. On the Duffin-Kemmer-Petiau formulation of  Klein-Gordon and field equations see \cite{Kanatchikov, Kruglov2} and references therein.
\begin{thm}
If the inhomogeneous form (\ref{5.1}) is a solution of the discrete Duffin equation (\ref{5.2}), then the $r$-form $\overset{r}{\omega}$ satisfies the discrete
 Klein-Gordon  equation (\ref{3.35}) for any $r=0,1,2,3$.
\end{thm}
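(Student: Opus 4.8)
The plan is to build on the factorization (\ref{3.33}), namely $-(d^c\delta^c+\delta^cd^c)=(d^c-\delta^c)^2$, which exhibits the discrete Laplace--Beltrami operator $\Delta^c$ as the square of the Dirac-type operator $d^c-\delta^c$ (this is the operator $D^c_-$ of (\ref{4.6}), and the factorization itself rests on $d^cd^c=0$ and $\delta^c\delta^c=0$). Since by hypothesis the inhomogeneous form $\overset{r\tau r}{\Omega}$ satisfies the discrete Duffin equation $(d^c-\delta^c)\overset{r\tau r}{\Omega}=m\overset{r\tau r}{\Omega}$, I would apply $d^c-\delta^c$ once more to both sides and then feed the Duffin equation back in on the right, obtaining $(d^c-\delta^c)^2\overset{r\tau r}{\Omega}=m(d^c-\delta^c)\overset{r\tau r}{\Omega}=m^2\overset{r\tau r}{\Omega}$, i.e. $\Delta^c\overset{r\tau r}{\Omega}=m^2\overset{r\tau r}{\Omega}$.

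The remaining step is to descend from the inhomogeneous form to its degree-$r$ part. Here the key observation is that the two summands of $\overset{r\tau r}{\Omega}=\overset{r}{\omega}+\overset{\tau r}{\omega}$ live in different homogeneous subspaces: $\overset{r}{\omega}\in K^r(4)$, while the first equation of (\ref{5.3}), $d^c\overset{r}{\omega}=m\overset{\tau r}{\omega}$, together with $d^c:K^r(4)\to K^{r+1}(4)$, forces $\overset{\tau r}{\omega}\in K^{r+1}(4)$. For $r=0,1,2,3$ the degrees $r$ and $r+1$ are distinct, and since $\Delta^c$ preserves the degree of forms by (\ref{3.32}), the identity $\Delta^c(\overset{r}{\omega}+\overset{\tau r}{\omega})=m^2(\overset{r}{\omega}+\overset{\tau r}{\omega})$ splits degree by degree; its degree-$r$ component is exactly $\Delta^c\overset{r}{\omega}=m^2\overset{r}{\omega}$, which is the discrete Klein--Gordon equation (\ref{3.35}). (The degree-$(r+1)$ component gives the same equation for $\overset{\tau r}{\omega}$.)

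As a sanity check I would also verify the conclusion by the more pedestrian route starting from the pair (\ref{5.3}). Rewriting them as $d^c\overset{r}{\omega}=m\overset{\tau r}{\omega}$ and $\delta^c\overset{\tau r}{\omega}=-m\overset{r}{\omega}$, one applies $\delta^c$ to the first to get $\delta^cd^c\overset{r}{\omega}=m\,\delta^c\overset{\tau r}{\omega}=-m^2\overset{r}{\omega}$, and applies $\delta^c$ to the second, using $\delta^c\delta^c=0$, to get $\delta^c\overset{r}{\omega}=0$ (for $m=0$ this is immediate from the degree-$(r-1)$ component of (\ref{5.2}), or, when $r=0$, holds trivially), whence $d^c\delta^c\overset{r}{\omega}=0$; summing the two gives $\Delta^c\overset{r}{\omega}=-(d^c\delta^c+\delta^cd^c)\overset{r}{\omega}=m^2\overset{r}{\omega}$ again.

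I do not anticipate any serious difficulty; the computation is short once the factorization (\ref{3.33}) is in hand. The one point deserving attention is the bookkeeping of degrees --- confirming that $\overset{r}{\omega}$ and $\overset{\tau r}{\omega}$ genuinely occupy distinct homogeneous components, so that the Klein--Gordon equation may be extracted component-wise, and noting that the hypothesis $r\le 3$ is precisely what guarantees $\overset{\tau r}{\omega}$ is a bona fide $(r+1)$-form inside $K(4)$ rather than being forced to vanish.
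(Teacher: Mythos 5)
Your proposal is correct, and your primary route is a slight variant of the paper's. The paper works directly with the component equations (\ref{5.3}): it applies $\delta^c$ to $d^c\overset{r}{\omega}=m\overset{\tau r}{\omega}$ to get $\delta^cd^c\overset{r}{\omega}=-m^2\overset{r}{\omega}$, then applies $\delta^c$ to the second equation of (\ref{5.3}) and uses $\delta^c\delta^c=0$ to conclude $\delta^c\overset{r}{\omega}=0$, hence $-(d^c\delta^c+\delta^cd^c)\overset{r}{\omega}=m^2\overset{r}{\omega}$; this is exactly your ``sanity check'' route. Your main argument instead squares the Duffin equation at the level of the inhomogeneous form, $(d^c-\delta^c)^2\overset{r\tau r}{\Omega}=m^2\overset{r\tau r}{\Omega}$, invokes (\ref{3.33}), and then projects onto the degree-$r$ component using the facts that $\Delta^c$ preserves degree and that $\overset{r}{\omega}$ and $\overset{\tau r}{\omega}$ lie in distinct homogeneous subspaces for $r\le 3$. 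What your route buys is uniformity in $m$: it uses only the hypothesis that $\overset{r\tau r}{\Omega}$ solves (\ref{5.2}) and works verbatim when $m=0$, whereas the step $\delta^c\overset{r}{\omega}=0$ in the paper's argument (deduced from $-\delta^c\delta^c\overset{\tau r}{\omega}=m\,\delta^c\overset{r}{\omega}$) implicitly divides by $m$, and for $m=0$ one must instead read $\delta^c\overset{r}{\omega}=0$ off the degree-$(r-1)$ component of (\ref{5.2}) itself --- a point you correctly flag. One cosmetic remark: the fact that $\overset{\tau r}{\omega}$ is an $(r+1)$-form is part of the definition (\ref{5.1}) and need not be inferred from $d^c\overset{r}{\omega}=m\overset{\tau r}{\omega}$ (which would again presuppose $m\neq 0$); with that reading, your degree bookkeeping is exactly right.
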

\begin{proof}
Let $\overset{r\tau r}{\Omega}$ is a solution of (\ref{5.2}).
From  (\ref{5.3}) we obtain
\begin{equation*}
\delta^cd^c\overset{r}{\omega}=m\delta^c\overset{\tau r}{\omega}=-m^2\overset{r}{\omega}.
\end{equation*}
Since $\delta^c\delta^c=0$, from the second equation of (\ref{5.3}) we have  $\delta^c\overset{r}{\omega}=0$. Hence it follows that
\begin{equation*}
\delta^cd^c\overset{r}{\omega}+d^c\delta^c\overset{r}{\omega}=-m^2\overset{r}{\omega}.
\end{equation*}
 \end{proof}

 Set
 \begin{equation}\label{5.4}
\overset{r}{\omega}=\overset{r}{\omega}_1+\overset{r}{\omega}_2,
\end{equation}
where
\begin{equation*}
\overset{r}{\omega}_1=\overset{r}{\omega}+\frac{1}{m}\delta^c\overset{\tau r}{\omega} \quad \mbox{and} \quad
  \overset{r}{\omega}_2=-\frac{1}{m}\delta^c\overset{\tau r}{\omega}
\end{equation*}
for any $r=1,2,3$. Let us construct the forms
\begin{equation}\label{5.5}
\overset{01}{\Omega}=\overset{0}{\omega}+\overset{1}{\omega}_1, \quad  \overset{12}{\Omega}=\overset{1}{\omega}_2+\overset{2}{\omega}_1,
\quad  \overset{23}{\Omega}=\overset{2}{\omega}_2+\overset{3}{\omega}_1, \quad  \overset{34}{\Omega}=\overset{3}{\omega}_2+\overset{4}{\omega}.
\end{equation}
According to (\ref{5.4}) it turns out that
\begin{equation}\label{5.6}
\overset{01}{\Omega}+\overset{12}{\Omega}+\overset{23}{\Omega}+\overset{34}{\Omega}=\sum_{r=0}^4\overset{r}{\omega}=\Omega.
\end{equation}
\begin{thm}
Let  $\overset{r\tau r}{\Omega}$ is given by (\ref{5.5}). Then  $\overset{r\tau r}{\Omega}$ satisfies the discrete Duffin equation (\ref{5.2}) for   $r=0,1,2,3$, and any solution  $\Omega$ of the discrete  Dirac-K\"{a}hler equation (\ref{4.3}) can be uniquely represent as (\ref{5.6}).
\end{thm}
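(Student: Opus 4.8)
The plan is to prove the two assertions — that each $\overset{r\tau r}{\Omega}$ listed in (\ref{5.5}) solves the discrete Duffin equation (\ref{5.2}), and that the decomposition (\ref{5.6}) of a given solution $\Omega$ of (\ref{4.3}) is unique — using only the component equations (\ref{4.4}), the nilpotency relations $d^cd^c=0$ and $\delta^c\delta^c=0$, and the standing hypothesis $m\ne0$ that is implicit in (\ref{5.4}).

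First I would record, proving them in order of decreasing degree, the four auxiliary identities
\[
d^c\overset{0}{\omega}=m\overset{1}{\omega}_1,\qquad
d^c\overset{1}{\omega}_2=m\overset{2}{\omega}_1,\qquad
d^c\overset{2}{\omega}_2=m\overset{3}{\omega}_1,\qquad
d^c\overset{3}{\omega}_2=m\overset{4}{\omega}.
\]
The first follows on substituting $\overset{1}{\omega}_1=\overset{1}{\omega}+\frac{1}{m}\delta^c\overset{2}{\omega}$ into the second line of (\ref{4.4}). For the inductive step one applies $d^c$ to the identity just obtained and uses $d^cd^c=0$; this gives $d^c\overset{r}{\omega}_1=0$ for the value of $r$ occurring on its right-hand side, whence $d^c\overset{r}{\omega}_2=d^c\overset{r}{\omega}$ since $\overset{r}{\omega}_2=\overset{r}{\omega}-\overset{r}{\omega}_1$, and the next line of (\ref{4.4}) together with the definition of $\overset{r+1}{\omega}_1$ rewrites $d^c\overset{r}{\omega}$ as $m\overset{r+1}{\omega}_1$; the last identity is just the bottom line $d^c\overset{3}{\omega}=m\overset{4}{\omega}$ of (\ref{4.4}). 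This loop is the only spot where the order of operations matters.

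Next I would verify the Duffin equations (\ref{5.3}) for the four forms in (\ref{5.5}). For $r=0,1,2,3$ the first equation of (\ref{5.3}), relating $d^c$ of the degree-$r$ part to $m$ times the degree-$(r{+}1)$ part, is exactly the corresponding auxiliary identity above. The second equation of (\ref{5.3}) reads $-\delta^c\overset{1}{\omega}_1=m\overset{0}{\omega}$, $-\delta^c\overset{2}{\omega}_1=m\overset{1}{\omega}_2$, $-\delta^c\overset{3}{\omega}_1=m\overset{2}{\omega}_2$ and $-\delta^c\overset{4}{\omega}=m\overset{3}{\omega}_2$ for $r=0,1,2,3$; in the first three cases $\delta^c\delta^c=0$ removes the correction term in $\overset{r+1}{\omega}_1$, so $\delta^c\overset{r+1}{\omega}_1=\delta^c\overset{r+1}{\omega}$, after which the $r=0$ relation becomes the top line of (\ref{4.4}) and the $r=1,2$ relations become the defining formulas $\overset{r}{\omega}_2=-\frac{1}{m}\delta^c\overset{r+1}{\omega}$, and the $r=3$ relation is that same defining formula for $\overset{3}{\omega}_2$. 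Since (\ref{5.6}) is immediate from (\ref{5.4}), this establishes the existence of the decomposition.

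Finally, for uniqueness, suppose $\Omega=\Theta_0+\Theta_1+\Theta_2+\Theta_3$ where each $\Theta_r$ is supported in degrees $r$ and $r+1$, is of the form (\ref{5.1}), and solves the Duffin equation (\ref{5.2}) with that value of $r$. Matching homogeneous components of $\Omega$ forces the degree-$0$ part of $\Theta_0$ to be $\overset{0}{\omega}$; the first equation of the $r=0$ Duffin system then forces its degree-$1$ part to be $\frac{1}{m}d^c\overset{0}{\omega}$, which equals $\overset{1}{\omega}_1$ by the first auxiliary identity; hence the degree-$1$ part of $\Theta_1$ equals $\overset{1}{\omega}-\overset{1}{\omega}_1=\overset{1}{\omega}_2$, its degree-$2$ part equals $\frac{1}{m}d^c\overset{1}{\omega}_2=\overset{2}{\omega}_1$, and proceeding down the chain one obtains $\Theta_r=\overset{r\tau r}{\Omega}$ for every $r$. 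I expect no genuine obstacle beyond the bookkeeping: every division by $m$ is legitimate and all the needed cancellations come from the two nilpotency relations; the one subtlety worth flagging is that the auxiliary identities have to be proved in decreasing-degree order rather than independently.
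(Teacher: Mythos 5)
Your proof is correct, and it fills in what the paper leaves implicit: the paper's own proof of this theorem is a one-sentence remark that since $d^cd^c=0$ and $\delta^c\delta^c=0$ hold for the discrete operators just as for $d$ and $\delta$, the continuum argument of \cite{Cantoni} carries over verbatim. Your write-up is essentially that transferred argument made explicit and self-contained: the four auxiliary identities $d^c\overset{0}{\omega}=m\overset{1}{\omega}_1$, $d^c\overset{1}{\omega}_2=m\overset{2}{\omega}_1$, $d^c\overset{2}{\omega}_2=m\overset{3}{\omega}_1$, $d^c\overset{3}{\omega}_2=m\overset{4}{\omega}$, the verification of both equations in (\ref{5.3}) for each form in (\ref{5.5}) using $\delta^c\delta^c=0$, and the uniqueness argument by matching homogeneous components all check out, using only (\ref{4.4}), the two nilpotency relations, and $m\neq0$ (which is indeed implicit in (\ref{5.4})). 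This buys the reader a proof that does not depend on consulting the continuum reference, at the cost of some bookkeeping the paper avoids. One wording slip worth correcting: you twice say the auxiliary identities must be proved ``in order of decreasing degree,'' but the induction you actually describe (and the one that works) runs in \emph{increasing} degree --- one establishes $d^c\overset{0}{\omega}=m\overset{1}{\omega}_1$ from the second line of (\ref{4.4}), then at each step applies $d^c$ to the previously obtained identity to get $d^c\overset{r}{\omega}_1=0$, hence $d^c\overset{r}{\omega}_2=d^c\overset{r}{\omega}$, before invoking the next line of (\ref{4.4}). With ``decreasing'' replaced by ``increasing'' the argument reads exactly as intended.
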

\begin{proof}
Since the differential operators $d$, $\delta$ and their discrete counterparts $d^c$, $\delta^c$ have the same properties, i.e., $d^cd^c=0$ and $\delta^c\delta^c=0$, the proof coincides with one in \cite{Cantoni}.
 \end{proof}
 \section{Massless limits  of the discrete Dirac-K\"{a}hler equation}

 We are interested in the massless discrete Dirac-K\"{a}hler equation.  Let us suppose  that the mass parameter  $m$ is equal to zero in equations (\ref{4.3}) and (\ref{4.4}). Consider the  transformation
 \begin{equation}\label{6.1}
\Omega\rightarrow\Omega+d^c\Phi,
\end{equation}
where $\Omega$ and  $\Phi$ are inhomogeneous  forms in view  (\ref{4.1}). The transformation (\ref{6.1}) is equivalent to
\begin{equation}\label{6.2}
\overset{0}{\omega}\rightarrow\overset{0}{\omega}, \qquad  \overset{r}{\omega}\rightarrow\overset{r}{\omega}+d^c\overset{\sigma r}\varphi,
\end{equation}
where $r=1,2,3,4$ and the shift operator $\sigma$ is given by (\ref{3.1}).

It should be noted that the transformation
\begin{equation*}
\alpha\rightarrow\alpha+d\varphi
\end{equation*}
is called in the continuum electromagnetic theory the gauge transformation. Here $\alpha$ is a differential  1-form associated with the  vector potential  and
$\varphi$ is a scalar function (gauge function).  The electromagnetic field is invariant under this gauge transformation.
By analogy, the transformation (\ref{6.2}) of discrete forms can be called gauge.
\begin{prop}
The discrete wave equation
\begin{equation}\label{6.3}
\Delta^c\Omega=0
\end{equation}
is invariant under the gauge transformation (\ref{6.1}), where the gauge form  $\overset{r}\varphi$ satisfies the condition
$\Delta^c\overset{r}\varphi=0$ for  $r=0,1,2,3$.
\end{prop}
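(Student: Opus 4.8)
The plan is to reduce the statement to the purely algebraic fact that $\Delta^c$ commutes with $d^c$, which is an immediate consequence of the relations $d^cd^c=0$ and $\delta^c\delta^c=0$ established earlier. Since $\Delta^c$ is linear, applying it to the transformed field gives
\begin{equation*}
\Delta^c(\Omega+d^c\Phi)=\Delta^c\Omega+\Delta^cd^c\Phi,
\end{equation*}
so everything comes down to showing $\Delta^cd^c\Phi=0$ under the stated hypothesis on the gauge form $\Phi=\sum_{r=0}^4\overset{r}{\varphi}$.

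First I would establish the commutation identity $\Delta^cd^c=d^c\Delta^c$ on $K(4)$. Using the definition (\ref{3.32}) and $d^cd^c=0$ one has $\Delta^cd^c=-(d^c\delta^c+\delta^cd^c)d^c=-d^c\delta^cd^c$, and similarly $d^c\Delta^c=-d^c(d^c\delta^c+\delta^cd^c)=-d^c\delta^cd^c$; comparing the two expressions gives the claim. This identity is metric-independent and does not involve the domain $V$. Next, recalling that $d^c\overset{4}{\varphi}=0$, we have $d^c\Phi=\sum_{r=0}^3 d^c\overset{r}{\varphi}$; since $\Delta^c$ preserves the degree of forms we may apply it termwise, and then the commutation identity together with the hypothesis $\Delta^c\overset{r}{\varphi}=0$ for $r=0,1,2,3$ yields
\begin{equation*}
\Delta^cd^c\Phi=\sum_{r=0}^3 d^c\Delta^c\overset{r}{\varphi}=0.
\end{equation*}
Combining this with the first display and the assumption $\Delta^c\Omega=0$ gives $\Delta^c(\Omega+d^c\Phi)=0$, which is the assertion.

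I do not expect a genuine obstacle here: the argument is elementary once the commutation identity is isolated. The only point that needs a little care is the bookkeeping of homogeneous components — in particular, observing that the top-degree component $\overset{4}{\varphi}$ of the gauge form plays no role because $d^c\overset{4}{\varphi}=0$, which is precisely why the condition $\Delta^c\overset{r}{\varphi}=0$ is imposed only for $r=0,1,2,3$ and not for $r=4$.
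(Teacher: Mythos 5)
Your proof is correct and follows essentially the same route as the paper: the paper verifies componentwise that $\Delta^c(\overset{r}{\omega}+d^c\overset{\sigma r}{\varphi})=\Delta^c\overset{r}{\omega}$ using $d^cd^c=0$ together with the hypothesis $\Delta^c\overset{r}{\varphi}=0$, which is exactly your commutation identity $\Delta^cd^c=d^c\Delta^c$ applied degree by degree. Your packaging of that step as an explicit operator identity (which in fact needs only $d^cd^c=0$, not $\delta^c\delta^c=0$) is a slightly cleaner presentation of the same argument, and your remark about $d^c\overset{4}{\varphi}=0$ correctly accounts for why the hypothesis is imposed only for $r=0,1,2,3$.
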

\begin{proof}
Since  $\delta^cd^c\overset{0}\varphi=0$ and
$\delta^cd^c\overset{r}\varphi=-d^c\delta^c\overset{r}\varphi$ \ for $r=1,2,3$,
we have
\begin{align*}
\Delta^c(\overset{r}{\omega}+d^c\overset{\sigma r}\varphi)&=-(d^c\delta^c+\delta^cd^c)(\overset{r}{\omega}+d^c\overset{\sigma r}\varphi)=
\Delta^c\overset{r}{\omega}-d^c\delta^cd^c\overset{\sigma r}\varphi\\
&=\Delta^c\overset{r}{\omega}+\delta^cd^cd^c\overset{\sigma r}\varphi=\Delta^c\overset{r}{\omega}
 \end{align*}
 for any $r=1,2,3,4$.
 Hence
 \begin{equation*}
\Delta^c(\Omega+d^c\Phi)=\Delta^c\Omega.
\end{equation*}
\end{proof}
It is clear that Equation  (\ref{6.3}) is also invariant under the transformation
\begin{equation}\label{6.4}
\Omega\rightarrow\Omega+\delta^c\Phi,
\end{equation}
where the homogeneous components of $\Phi$ satisfy the condition $\Delta^c\overset{r}{\varphi}=0$ for $r=1,2,3,4$.
\begin{prop}
The discrete massless Dirac-K\"{a}hler equation
\begin{equation}\label{6.5}
(d^c-\delta^c)\Omega=0
\end{equation}
is invariant under the following transformation
\begin{equation}\label{6.6}
\Omega\rightarrow\Omega+d^c\Phi-\delta^c\Phi,
\end{equation}
where $\Phi$  satisfies equation (\ref{6.3}).
\end{prop}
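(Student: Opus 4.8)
The plan is to substitute the transformed form directly into the left-hand side of (\ref{6.5}) and exploit the linearity of $d^c$ and $\delta^c$ together with their nilpotency. Writing $D^c_- = d^c - \delta^c$ as in (\ref{4.6}), the transformation (\ref{6.6}) reads $\Omega \mapsto \Omega + D^c_-\Phi$, so applying $D^c_-$ and using the hypothesis $(d^c-\delta^c)\Omega = 0$ gives
\begin{equation*}
D^c_-\bigl(\Omega + D^c_-\Phi\bigr) = D^c_-\Omega + (D^c_-)^2\Phi = (d^c - \delta^c)^2\Phi .
\end{equation*}
Hence the whole statement reduces to checking that $(d^c - \delta^c)^2\Phi = 0$.

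For that step I would expand $(d^c - \delta^c)^2 = d^cd^c - d^c\delta^c - \delta^cd^c + \delta^c\delta^c$ and invoke the identities $d^cd^c\overset{r}{\omega} = 0$ and $\delta^c\delta^c\overset{r}{\omega} = 0$ from Section 3; this is exactly Relation (\ref{3.33}), so that $(d^c - \delta^c)^2 = -(d^c\delta^c + \delta^cd^c) = \Delta^c$, the operator of (\ref{3.32}), acting component-wise on the homogeneous parts $\overset{r}{\varphi}$ of $\Phi$. Although $d^c$ and $\delta^c$ separately shift degree, the degree-raising-twice and degree-lowering-twice terms are annihilated by $d^cd^c=0$ and $\delta^c\delta^c=0$, so the surviving composition maps each $K^r(4)$ back to $K^r(4)$ and the identification with $\Delta^c$ is legitimate on each $\overset{r}{\varphi}$. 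Since by assumption $\Phi$ satisfies (\ref{6.3}), i.e. $\Delta^c\overset{r}{\varphi}=0$ for every $r$, we conclude $(d^c-\delta^c)^2\Phi = \Delta^c\Phi = 0$, hence $D^c_-(\Omega + D^c_-\Phi) = 0$, which is the claim.

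I do not expect a genuine obstacle here: the argument is the discrete mirror of the continuum gauge invariance of $(d-\delta)\Omega=0$, and it goes through verbatim precisely because, by Section 3, $d^c$ and $\delta^c$ obey the same relations $d^cd^c=0$, $\delta^c\delta^c=0$ as $d$ and $\delta$. The only point deserving a line of care is the degree bookkeeping just mentioned, namely verifying that two applications of $D^c_-$ to an inhomogeneous form reproduce the component-wise operator $\Delta^c$ and nothing that mixes homogeneous components. It is also worth noting, by contrast with the wave equation (\ref{6.3}), that the full combination $d^c\Phi - \delta^c\Phi$ is essential: neither $\Omega\mapsto\Omega+d^c\Phi$ nor $\Omega\mapsto\Omega-\delta^c\Phi$ alone preserves (\ref{6.5}), since the cross term $\delta^cd^c\Phi = -d^c\delta^c\Phi$ need not vanish under (\ref{6.3}).
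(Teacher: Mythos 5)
Your argument is correct and is essentially the paper's own proof: both substitute the transformed form into $(d^c-\delta^c)$, use the identity (\ref{3.33}) that $(d^c-\delta^c)^2=\Delta^c$ (resting on $d^cd^c=0$, $\delta^c\delta^c=0$), and conclude from the hypothesis $\Delta^c\Phi=0$ of (\ref{6.3}). Your extra remarks on degree bookkeeping and on the necessity of the full combination $d^c\Phi-\delta^c\Phi$ are sound but not needed beyond what the paper states.
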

\begin{proof}
 Since, by assumption $\Delta^c\Phi=0$ and by (\ref{3.33}),  one has
 \begin{equation*}\label{}
(d^c-\delta^c)(\Omega+d^c\Phi-\delta^c\Phi)=(d^c-\delta^c)\Omega+(d^c-\delta^c)^2\Phi=(d^c-\delta^c)\Omega.
\end{equation*}
\end{proof}
Note that the transformation (\ref{6.6})  can be  written  more explicitly  as
\begin{equation*}\label{}
\overset{0}{\omega}\rightarrow\overset{0}{\omega}-\delta^c\overset{1}\varphi, \qquad
\overset{4}{\omega}\rightarrow\overset{4}{\omega}+d^c\overset{3}\varphi, \qquad
\overset{r}{\omega}\rightarrow\overset{r}{\omega}+d^c\overset{\sigma r}\varphi-\delta^c\overset{\tau r}\varphi,
\end{equation*}
where  $r=1,2,3$.

It is known (see \cite{Kruglov1, Pletyuhov}) that in the classical (continuous) theory there are three massless analogs of Equation (\ref{4.3}). Now we describe discrete
counterparts of these cases. Let introduce two mass parameters $m_1, m_2$ and  put $m=\sqrt{m_1m_2}$. Then the system  (\ref{4.4})  with two mass parameters can be written as
\begin{align}\label{6.7}
-\delta^c\overset{1}{\omega}=m_1\overset{0}{\omega}, \nonumber \\
d^c\overset{0}{\omega}-\delta^c\overset{2}{\omega}=m_2\overset{1}{\omega}, \nonumber \\
d^c\overset{1}{\omega}-\delta^c\overset{3}{\omega}=m_1\overset{2}{\omega},\\
d^c\overset{2}{\omega}-\delta^c\overset{4}{\omega}=m_2\overset{3}{\omega}, \nonumber \\
d^c\overset{3}{\omega}=m_1\overset{4}{\omega}.  \nonumber
\end{align}
It is easy to check that if the collection of all  forms $\overset{r}{\omega}$ is a solution of (\ref{6.7}), then the inhomogeneous form $\Omega$ (\ref{4.1})  satisfies the equation
\begin{equation}\label{6.8}
\Delta^c\Omega=m_1m_2\Omega.
\end{equation}
Hence, if any the mass parameters $m_1, m_2$ (or both simultaneously) is equal to zero then
the equation (\ref{6.8}) corresponds to the massless case. Of course, in the case $m_1=m_2=0$ the system (\ref{6.7}) is equivalent to equation (\ref{6.5}).
Let us consider the case  $m_2=0$ and $m_1\neq0$. In the continuous theory this is the so-called electromagnetic massless limit of the Dirac-K\"{a}hler equation \cite{Kruglov1, Pletyuhov}.
A discrete analog for this case has the form
\begin{align}\label{6.9}
-\delta^c\overset{1}{\omega}=m_1\overset{0}{\omega}, \nonumber \\
d^c\overset{0}{\omega}-\delta^c\overset{2}{\omega}=0, \nonumber \\
d^c\overset{1}{\omega}-\delta^c\overset{3}{\omega}=m_1\overset{2}{\omega},\\
d^c\overset{2}{\omega}-\delta^c\overset{4}{\omega}=0, \nonumber \\
d^c\overset{3}{\omega}=m_1\overset{4}{\omega}.  \nonumber
\end{align}
\begin{prop}
The system (\ref{6.9})
is invariant under the following transformation
\begin{equation}\label{6.10}
\overset{1}{\omega}\rightarrow\overset{1}{\omega}+d^c\overset{0}{\varphi}, \qquad \overset{3}{\omega}\rightarrow\overset{3}{\omega}+\delta^c\overset{4}{\varphi},
\end{equation}
where the gauge forms $\overset{0}{\varphi}$ and $\overset{4}{\varphi}$  satisfy the  equations $\Delta^c\overset{0}{\varphi}=0$ and $\Delta^c\overset{4}{\varphi}=0$.
\end{prop}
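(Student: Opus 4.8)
The plan is to substitute the transformed forms into each of the five equations of \eqref{6.9} and check that the extra terms introduced by \eqref{6.10} all vanish. Since only $\overset{1}{\omega}$ and $\overset{3}{\omega}$ are altered, the second and fourth equations of \eqref{6.9}, which involve only the unchanged forms $\overset{0}{\omega},\overset{2}{\omega},\overset{4}{\omega}$, are preserved without any computation. The third equation, $d^c\overset{1}{\omega}-\delta^c\overset{3}{\omega}=m_1\overset{2}{\omega}$, picks up the extra term $d^cd^c\overset{0}{\varphi}-\delta^c\delta^c\overset{4}{\varphi}$ under the substitution, and this vanishes immediately because $d^cd^c=0$ and $\delta^c\delta^c=0$.

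The only equations requiring genuine input from the hypothesis $\Delta^c\overset{0}{\varphi}=\Delta^c\overset{4}{\varphi}=0$ are the first and the last. For the first equation, substituting $\overset{1}{\omega}\mapsto\overset{1}{\omega}+d^c\overset{0}{\varphi}$ produces the extra term $-\delta^cd^c\overset{0}{\varphi}$; I would observe that $\overset{0}{\varphi}$ is a $0$-form, so $\delta^c\overset{0}{\varphi}=0$, whence $d^c\delta^c\overset{0}{\varphi}=0$ and therefore $\Delta^c\overset{0}{\varphi}=-\delta^cd^c\overset{0}{\varphi}$. The assumption $\Delta^c\overset{0}{\varphi}=0$ then forces $\delta^cd^c\overset{0}{\varphi}=0$, so the first equation is unchanged. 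Dually, for the last equation, substituting $\overset{3}{\omega}\mapsto\overset{3}{\omega}+\delta^c\overset{4}{\varphi}$ produces the extra term $d^c\delta^c\overset{4}{\varphi}$; since $\overset{4}{\varphi}$ is a $4$-form, $d^c\overset{4}{\varphi}=0$, hence $\delta^cd^c\overset{4}{\varphi}=0$ and $\Delta^c\overset{4}{\varphi}=-d^c\delta^c\overset{4}{\varphi}$, so $\Delta^c\overset{4}{\varphi}=0$ gives $d^c\delta^c\overset{4}{\varphi}=0$ and the last equation survives.

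The step I expect to be the crux — though it is more an observation than an obstacle — is precisely this reduction at the extreme degrees: for a $0$-form the codifferential is identically zero and for a $4$-form the coboundary is identically zero (both facts are recorded in Section~3), so in each case the Laplacian condition collapses to the single identity ($\delta^cd^c\overset{0}{\varphi}=0$ or $d^c\delta^c\overset{4}{\varphi}=0$) needed to neutralise the gauge term in the coupled equation it appears in. Collecting the five checks, the transformation \eqref{6.10} maps any solution of \eqref{6.9} to a solution of \eqref{6.9}, which is the assertion.
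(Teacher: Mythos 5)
Your proposal is correct and follows essentially the same route as the paper: substitute the transformed forms, note that the second and fourth equations are untouched, kill the extra terms in the third equation via $d^cd^c=0$ and $\delta^c\delta^c=0$, and in the first and fifth equations use that $\delta^c\overset{0}{\varphi}=0$ and $d^c\overset{4}{\varphi}=0$ so that the Laplacian hypotheses reduce to $\delta^cd^c\overset{0}{\varphi}=0$ and $d^c\delta^c\overset{4}{\varphi}=0$. The only cosmetic difference is that you spell out the trivially preserved equations, which the paper leaves implicit.
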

\begin{proof}
Under the gauge transformation (\ref{6.10}), the left hand side of the corresponding equations of (\ref{6.9}) becomes
 \begin{equation*}\label{}
-\delta^c(\overset{1}{\omega}+d^c\overset{0}{\varphi})=-\delta^c\overset{1}{\omega}-\delta^cd^c\overset{0}{\varphi}=
-\delta^c\overset{1}{\omega}+\Delta^c\overset{0}{\varphi}=-\delta^c\overset{1}{\omega},
\end{equation*}
\begin{equation*}\label{}
d^c(\overset{1}{\omega}+d^c\overset{0}{\varphi})-\delta^c(\overset{3}{\omega}+\delta^c\overset{4}{\varphi})=d^c\overset{1}{\omega}+d^cd^c\overset{0}{\varphi}
-\delta^c\overset{3}{\omega}-\delta^c\delta^c\overset{4}{\varphi}=d^c\overset{1}{\omega}-\delta^c\overset{3}{\omega}
\end{equation*}
and
\begin{equation*}\label{}
d^c(\overset{3}{\omega}+\delta^c\overset{4}{\varphi})=d^c\overset{3}{\omega}+d^c\delta^c\overset{4}{\varphi}=
d^c\overset{3}{\omega}-\Delta^c\overset{4}{\varphi}=d^c\overset{3}{\omega}.
\end{equation*}
\end{proof}

In the case  $m_1=0$ and $m_2\neq0$ we obtain
\begin{align}\label{6.11}
-\delta^c\overset{1}{\omega}=0, \nonumber \\
d^c\overset{0}{\omega}-\delta^c\overset{2}{\omega}=m_2\overset{1}{\omega}, \nonumber \\
d^c\overset{1}{\omega}-\delta^c\overset{3}{\omega}=0,\\
d^c\overset{2}{\omega}-\delta^c\overset{4}{\omega}=m_2\overset{3}{\omega}, \nonumber \\
d^c\overset{3}{\omega}=0.  \nonumber
\end{align}
This system is a discrete analog of the Kalb-Ramond field equations (notoph equations) \cite{Malik, Pletyuhov}.

\begin{prop}
The system (\ref{6.11})
is invariant under the following transformation
\begin{equation}\label{6.12}
\overset{0}{\omega}\rightarrow\overset{0}{\omega}-\delta^c\overset{1}{\varphi}, \qquad \overset{2}{\omega}\rightarrow\overset{2}{\omega}+d^c\overset{1}{\varphi}-\delta^c\overset{3}{\varphi}, \qquad \overset{4}{\omega}\rightarrow\overset{4}{\omega}+d^c\overset{3}{\varphi},
\end{equation}
where the gauge forms $\overset{1}{\varphi}$ and $\overset{3}{\varphi}$  satisfy the  equations $\Delta^c\overset{1}{\varphi}=0$ and $\Delta^c\overset{3}{\varphi}=0$.
\end{prop}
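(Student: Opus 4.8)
The plan is to proceed exactly as in the proof of the preceding proposition: substitute the transformed components from (\ref{6.12}) into each of the five difference equations of (\ref{6.11}) and check, term by term, that the left-hand sides are left unchanged. The only tools required are the nilpotency relations $d^cd^c=0$ and $\delta^c\delta^c=0$ together with the identity $\Delta^c=-(d^c\delta^c+\delta^cd^c)$ from (\ref{3.32}).

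First I would note that (\ref{6.12}) transforms only the even-degree pieces $\overset{0}{\omega}$, $\overset{2}{\omega}$, $\overset{4}{\omega}$ and leaves $\overset{1}{\omega}$ and $\overset{3}{\omega}$ fixed. Hence the first, third and fifth equations of (\ref{6.11}) --- which involve only $\overset{1}{\omega}$ and $\overset{3}{\omega}$ --- are invariant with nothing to check, and the right-hand sides $m_2\overset{1}{\omega}$ and $m_2\overset{3}{\omega}$ of the remaining two equations are also untouched. It therefore suffices to examine the left-hand sides of the second and fourth equations. For the second, inserting $\overset{0}{\omega}\mapsto\overset{0}{\omega}-\delta^c\overset{1}{\varphi}$ and $\overset{2}{\omega}\mapsto\overset{2}{\omega}+d^c\overset{1}{\varphi}-\delta^c\overset{3}{\varphi}$ gives
\begin{equation*}
d^c\overset{0}{\omega}-\delta^c\overset{2}{\omega}-d^c\delta^c\overset{1}{\varphi}-\delta^cd^c\overset{1}{\varphi}+\delta^c\delta^c\overset{3}{\varphi}=d^c\overset{0}{\omega}-\delta^c\overset{2}{\omega}+\Delta^c\overset{1}{\varphi},
\end{equation*}
using $\delta^c\delta^c=0$; since $\Delta^c\overset{1}{\varphi}=0$ by hypothesis this equals $d^c\overset{0}{\omega}-\delta^c\overset{2}{\omega}$. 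For the fourth equation, inserting $\overset{2}{\omega}\mapsto\overset{2}{\omega}+d^c\overset{1}{\varphi}-\delta^c\overset{3}{\varphi}$ and $\overset{4}{\omega}\mapsto\overset{4}{\omega}+d^c\overset{3}{\varphi}$ and cancelling the term $d^cd^c\overset{1}{\varphi}=0$ leaves $d^c\overset{2}{\omega}-\delta^c\overset{4}{\omega}-d^c\delta^c\overset{3}{\varphi}-\delta^cd^c\overset{3}{\varphi}=d^c\overset{2}{\omega}-\delta^c\overset{4}{\omega}+\Delta^c\overset{3}{\varphi}$, which equals $d^c\overset{2}{\omega}-\delta^c\overset{4}{\omega}$ because $\Delta^c\overset{3}{\varphi}=0$. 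This completes the argument.

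There is no genuine obstacle: the proposition is a bookkeeping exercise, and the structure of (\ref{6.12}) has already been arranged so that the unwanted terms assemble precisely into $\Delta^c$ applied to a gauge form. The only point requiring a little care is the matching of degrees --- that $d^c\overset{1}{\varphi}$ and $\delta^c\overset{3}{\varphi}$ are both $2$-forms so they may legitimately be added to $\overset{2}{\omega}$, and similarly for the $0$- and $4$-form corrections --- but this is automatic once one tracks which homogeneous component of the gauge form $\Phi$ each term originates from.
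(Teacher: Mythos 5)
Your proof is correct and follows essentially the same route as the paper: substitute (\ref{6.12}) into the second and fourth equations of (\ref{6.11}), use $d^cd^c=0$, $\delta^c\delta^c=0$ and $\Delta^c=-(d^c\delta^c+\delta^cd^c)$ so that the extra terms collapse to $\Delta^c\overset{1}{\varphi}$ and $\Delta^c\overset{3}{\varphi}$, which vanish by hypothesis. Your explicit observation that the first, third and fifth equations (and the right-hand sides) involve only the untransformed odd-degree forms is a useful bit of bookkeeping that the paper leaves implicit, but it is not a different argument.
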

\begin{proof}
Substituting (\ref{6.12}) into the left hand side of the second and fourth equations of (\ref{6.11}) we obtain
 \begin{align*}\label{}
d^c(\overset{0}{\omega}-\delta^c\overset{1}{\varphi})-\delta^c(\overset{2}{\omega}+d^c\overset{1}{\varphi}-\delta^c\overset{3}{\varphi})&=
d^c\overset{0}{\omega}-d^c\delta^c\overset{1}{\varphi}
-\delta^c\overset{2}{\omega}-\delta^cd^c\overset{1}{\varphi}-\delta^c\delta^c\overset{3}{\varphi}\\
&=d^c\overset{0}{\omega}-\delta^c\overset{2}{\omega}+\Delta^c\overset{1}{\varphi}=d^c\overset{0}{\omega}-\delta^c\overset{2}{\omega}
\end{align*}
and
\begin{align*}\label{}
d^c(\overset{2}{\omega}+d^c\overset{1}{\varphi}-\delta^c\overset{3}{\varphi})-\delta^c(\overset{4}{\omega}+d^c\overset{3}{\varphi})&=
d^c\overset{2}{\omega}+d^cd^c\overset{1}{\varphi}-\delta^cd^c\overset{3}{\varphi}
-\delta^c\overset{4}{\omega}-\delta^cd^c\overset{3}{\varphi}\\
&=d^c\overset{2}{\omega}-\delta^c\overset{4}{\omega}+\Delta^c\overset{3}{\varphi}=d^c\overset{2}{\omega}-\delta^c\overset{4}{\omega}.
\end{align*}
\end{proof}

We should remark that in the continual case the two massless limits are related to self-dual and anti-self-dual Dirac-K\"{a}hler fields. The Dirac-K\"{a}hler
equation decomposes on its self-dual and anti-self-dual parts by using the projections operators $P_\pm=\frac{1}{2}(I\pm\ast)$, where $\ast$ is the Hodge star operator.
The reason that we do not discuss this relation here is that in our formalism the operation $\ast^2$ is equivalent to a shift with corresponding sign. This is somewhat different from the continual case, where the operation  $\ast$ is either an involution or antiinvolution, i.e,  $\ast^2=\pm I$. Therefore there are difficulties in getting discrete counterparts of the projections operators.

\section{Conclusion}
Within a differential geometric discretisation approach  a new discrete analog of the Dirac-K\"{a}hler equation has been proposed.
In this discrete model some key aspects  of Hodge theory relevant to physics are captured. We have considered three massless limits of  the discrete Dirac-K\"{a}hler
equation. The intrinsic notation for the discrete equations, given in terms of the difference operators $d^c$ and $\delta^c$, allows us to discus the intrinsic form of gauge transformations and the invariance of the  discrete massless system under such transformations has been studied.

The  Dirac-K\"{a}hler formalism can be formulated by the Clifford product of differential forms. It is important to clarify the relation between the Dirac-K\"{a}hler and Dirac  equation. The identification of inhomogeneous differential forms and the gamma (Dirac) matrices is possible due to the Clifford product.
It would be interesting to introduce a Clifford product  which acts on the space of our discrete inhomogeneous forms. Giving this we should generalize the discrete Dirac operator introduced above. We also expect to give a better explanation  to transformations properties of a discrete model. These directions must be investigated and we hope to treat them further in future work.

\medskip

{\bf Acknowledgement}
\medskip

The author is grateful to the referee for bringing \cite{Catterall, Rabin} to his attention.

\end{document}